\newcommand{\mf}{\mathfrak}
\newcommand{\mc}{\mathcal}
\newcommand{\imp}{\rightarrow}
\newcommand{\Imp}{\Rightarrow}
\newcommand{\D}{\Diamond}
\newcommand{\B}{\Box}
\newtheorem{thm}{Theorem}
\newtheorem{lem}[thm]{Lemma}
\title{The Computational Compexity of Decision Problem in Additive Extensions of Nonassociative Lambek Calculus}
\author{Zhe Lin\inst{1} \and Minghui Ma\inst{2}}
\institute{Institute of Logic and Cognition Sun Yat-sen University, Guangzhou, China\\
Faculty of Mathematics and Computer Science, Adam Mickiewicz University, Pozna\'{n}, Poland.\\
\email{pennyshaq@gmail.com}
\and
Institute for Logic and Intelligence, Southwest University,\\
Beibei District, Chongqing, 400715, China.\\
\email{mmh.thu@gmail.com}}
\begin{document}
\maketitle
\begin{abstract}
We analyze the complexity of decision problems for Boolean Nonassociative Lambek Calculus admitting empty antecedent of sequents ($\mathsf{BFNL^*}$), and the consequence relation of Distributive Full Nonassociative Lambek Calculus ($\mathsf{DFNL}$). We construct a polynomial reduction from modal logic K into $\mathsf{BFNL^*}$. As a consequence, we prove that the decision problem for $\mathsf{BFNL^*}$ is PSPACE-hard. We also prove that the same result holds for the consequence relation of $\mathsf{DFNL}$, by reducing $\mathsf{BFNL^*}$ in polynomial time to $\mathsf{DFNL}$ enriched with finite set of assumptions. Finally, we prove analogous results for variants of $\mathsf{BFNL^*}$, including $\mathsf{BFNL_e^*}$ ($\mathsf{BFNL^*}$ with exchange), modal extensions of $\mathsf{BFNL_i^*}$ and $\mathsf{BFNL_{ei}^*}$ for $i\in\{\mathsf{K,T,K4,S4,S5}\}$.
\end{abstract}

\section{Introduction and Preliminaries}
Nonassociative Lambek Calculus ($\mathsf{NL}$) was introduced by Lambek \cite{jm2} as a variant of Lambek Calculus $\mathsf{L}$ \cite{JM}. Many variants of $\mathsf{L}$ and $\mathsf{NL}$ were studied in the last decades. $\mathsf{L}$ extended with conjunction ($\wedge$) and disjunction $(\vee$) was introduced in \cite{Ka}. $\mathsf{NL}$ with $\wedge$, $\vee$ satisfying the distribution law ($\mathsf{DFNL}$), and $\mathsf{DFNL}$ with a boolean negation $\neg$ ($\mathsf{BFNL}$), were studied in \cite{bus3,bus4}, where it was proved that the consequence relations of both systems are decidable, and that the categorial grammars based on them generate context-free languages. The proof of decidability is based on the proof of the finite embeddability property in \cite{bus4}. The decidability of the latter one was later shown again in terms of relational semantics in \cite{DSk,KF13}. There are also many complexity results for $\mathsf{L}$, $\mathsf{NL}$ and their variants \cite{dest,Penu2,YS,HTK}. The most outstanding one is that $\mathsf{L}$ is NP-complete \cite{Penu2}.

In this paper we analyze the complexity of the decision problem of $\mathsf{BFNL^*}$ ($\mathsf{BFNL}$ admitting empty antecedent of sequents), and that of the consequence relation of $\mathsf{DFNL}$. The main result is that the decision problems for both $\mathsf{BFNL^*}$ and the consequence relation of $\mathsf{DFNL}$ are PSPACE-hard. Both results were claimed first in \cite{bus7} and the latter one was proved by Buszkowski using a different method in an unpublished paper. The relational semantics for $\mathsf{BFNL^*}$ in \cite{KF13} is essentially used in our proof. We take some techniques and  notations from  \cite{Ver1,KF13}. We also study the consequence relations for logics. Put it differently, we consider logics enriched with (finitely many) assumptions which are simple sequents but not closed under uniform substitutions. Hereafter, we denoted logic $\mathbf{L}$ enriched with set of assumptions $\mathrm{\Phi}$ by $\mathbf{L} \mathrm{(\Phi)}$.

This paper is organized as follows. In what follows of this section, we introduce some notations and remind the sequent system of $\mathsf{BFNL^*}$ and the complexity results for normal modal logics. In section 2, we construct a polynomial reduction from modal logic $\mathsf{K}$ into $\mathsf{BFNL^*}$, which yields the PSPACE-hardness of the decision problem for $\mathsf{BFNL^*}$. In section 3, we show the decision problem for $\mathsf{DFNL}(\mathrm{\Phi})$ is PSPACE-hard by reducing $\mathsf{BFNL^*}$ first to $\mathsf{BDFNL}(\mathrm{\Phi})$ (Bounded Distributive Full Nonassociative Lambek Calculus), and then to $\mathsf{DFNL}(\mathrm{\Phi})$  in polynomial time. In section 4, we extend our results to some variants of $\mathsf{BFNL^*}$, including $\mathsf{BFNL^*}$ enriched with exchange,modalities, constant $1$ and any combination of them.

Now let us fix our notations. The language $\mc{L}_\mathsf{K}(\mathsf{Prop})$ of modal logic consists of a set $\mathsf{Prop}$ of propositional letters, connectives $\bot, \wedge, \vee,\supset$ and an uary modal operator $\D$.
The set of all modal formulae is defined by the following inductive rule:
\[
A::= p\mid \bot\mid A\wedge B \mid A\vee B\mid A\supset B\mid \D A,~~p\in\mathsf{Prop}
\]
Define $\neg A:= A\supset \bot$, $\B A:= \neg \D\neg A$ and $A\equiv B := (A\supset B)\wedge (B\supset A)$.

Let $\mf{M}=(W,R,V)$ be a Kripke model, where $W$ is a nonempty set of states, $R$ is a binary relation over $W$, and $V:\mathsf{Prop}\imp \wp(W)$ (powerset of $W$) is a valuation function. The notion of {\em truth} of a modal formula $\mf{M},w\models A$ is defined recursively as follows:
\begin{enumerate}
\item[] $\mf{M},w\models p$ iff $w\in V(p)$.
\item[] $\mf{M},w\not\models \bot$
\item[] $\mf{M},w\models A\vee B$ iff $\mf{M}, w\models A$ or $\mf{M}, w\models B$.
\item[] $\mf{M}, w \models A\wedge B$ iff $\mf{M},w\models A$ and $\mf{M}, w\models B$.
\item[] $\mf{M},w\models A\supset B$ iff $\mf{M}, w\not \models A$ or $\mf{M},w\models B$.
\item[] $\mf{M}, w\models \D A$, if there exists $u\in W$ such that $Rwu$ and $\mf{M},w\models A$.
\end{enumerate}
A modal formula $A$ is {\em valid}, if it is true at every state in all models.

The minimal normal modal logic $\mathsf{K}$ is axiomatized by the following axiom schemata and inference rules (\cite{BDV01}):
\begin{enumerate}
\item[{$\bullet$}]All instances of propositional tautologies.
\item[{$\bullet$}]$\Box(A\supset B)\supset (\Box A\supset \Box B)$
\item[{$\bullet$}](MP) from $\vdash A \supset B$ and $\vdash A$ infer $\vdash B$.
\item[{$\bullet$}](Nec) from $\vdash A$ infer $\vdash\Box A$.
\end{enumerate}
The modal logic $\mathsf{K}$ is sound and complete, i.e., a modal formula $A$ is provable in $\mathsf{K}$ iff $A$ is valid.

The PSPACE-hardness of the validity problem of modal logic $\mathsf{K}$ was settled first by Ladner \cite{Lad77}. Let us recall this thereom from \cite{BDV01} (Theorem 6.50).

\begin{thm}[Lander's Theorem]\label{theorem:lander}
If $\mathcal{S}$ is a normal modal logic such that $\mathsf{K}\subseteq \mathcal{S} \subseteq \mathsf{S4}$ then $\mathcal{S}$ has a PSPACE-hard satisfiability problem. Moreover, $\mathcal{S}$ has PSPACE-hard validity problem.
\end{thm}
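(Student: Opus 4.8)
The plan is to reduce a PSPACE-complete problem to the satisfiability problem of $\mathcal{S}$, the natural candidate being the truth problem for quantified Boolean formulae (QBF), which is known to be PSPACE-complete. The crucial observation that lets a single reduction serve the whole interval $\mathsf{K}\sub\mathcal{S}\sub\mathsf{S4}$ is that the inclusions of logics reverse on the associated classes of consistent formulae: since $\mathsf{K}\sub\mathcal{S}\sub\mathsf{S4}$, every $\mathsf{S4}$-consistent formula is $\mathcal{S}$-consistent, and every $\mathcal{S}$-consistent formula is $\mathsf{K}$-consistent. Hence it suffices to construct a polynomial-time translation $f$ from QBF instances to modal formulae such that (i) if the input QBF is true then $f$ of it is satisfiable in some reflexive-transitive (i.e.\ $\mathsf{S4}$) model, and (ii) if the QBF is false then $f$ of it is unsatisfiable already over the class of all (i.e.\ $\mathsf{K}$) frames. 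The sandwich then forces $f(\Phi)$ to be $\mathcal{S}$-consistent exactly when $\Phi$ is true, which is the desired reduction.

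Next I would fix the shape of the translation. Writing a closed QBF as $\Phi = Q_1 p_1\cdots Q_n p_n\,\psi$ with $\psi$ quantifier-free, the idea is to let modal depth encode the quantifier prefix: a satisfying model is a tree of depth $n$ in which a path from the root of length $i$ records a truth assignment to $p_1,\ldots,p_i$. Auxiliary propositional letters mark the current level and the committed truth values, so that at level $i$ a $\D$-step introduces a choice of value for $p_{i+1}$. A universal quantifier $Q_i=\forall$ is encoded by demanding with $\B$ that \emph{both} successors (the one committing $p_i$ true and the one committing it false) carry on the satisfaction of the tail, whereas an existential quantifier is encoded by a $\D$ demanding only one successor; at the leaves one asserts the matrix $\psi$ evaluated against the committed assignment. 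All of this is purely Boolean bookkeeping and is plainly computable in polynomial time.

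Then I would prove correctness. The forward direction builds, from a winning strategy for $\Phi$, a finite tree model and then passes to its reflexive-transitive closure, checking that the level-marking discipline makes the added reflexive and transitive edges harmless, so that $f(\Phi)$ holds at the root of an $\mathsf{S4}$ model. The backward direction unravels any $\mathsf{K}$-model of $f(\Phi)$ into a tree and reads off from it a strategy witnessing the truth of $\Phi$; here the level markers guarantee that accessibility steps cannot skip or revisit levels, which is what pins the model down to the intended tree shape. Combining the two directions with the consistency sandwich gives that $f$ reduces QBF-truth to $\mathcal{S}$-satisfiability, establishing the PSPACE-hardness of the latter.

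Finally, PSPACE-hardness of validity follows because PSPACE is closed under complementation: $A$ is $\mathcal{S}$-valid iff $\neg A$ is not $\mathcal{S}$-satisfiable, so a polynomial reduction to satisfiability yields one to non-validity and hence to validity. The main obstacle, and the only genuinely delicate point, is the uniformity across $[\mathsf{K},\mathsf{S4}]$: one must engineer the level markers so that the very same formula $f(\Phi)$ is satisfiable in the $\mathsf{K}$ sense precisely when it is satisfiable in the $\mathsf{S4}$ sense, i.e.\ that passing to the reflexive-transitive closure neither creates nor destroys models of $f(\Phi)$. Everything else is routine syntactic encoding.
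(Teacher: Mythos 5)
The paper does not prove this theorem at all: it is imported as Ladner's result, cited from Blackburn--de Rijke--Venema (Theorem 6.50), so there is no in-paper proof to compare against. Your outline is a correct reconstruction of the standard Ladner argument behind that citation --- the QBF encoding via level markers, the consistency ``sandwich'' (S4-satisfiable $\Rightarrow$ $\mathcal{S}$-consistent $\Rightarrow$ $\mathsf{K}$-consistent, using soundness of $\mathsf{S4}$ and completeness of $\mathsf{K}$), and closure of PSPACE under complementation for the validity claim --- with only routine details (e.g.\ the nested-box persistence clauses keeping the formula polynomial) left implicit.
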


Now we recall some basic notions and sequent system for $\mathsf{BFNL^*}$.
Let $\mc{L}_{\mathsf{BFNL^*}}(\mathsf{Prop})$ be the language of $\mathsf{BFNL^*}$ built from the set $\mathsf{Prop}$ of propositional letters by Lambek connectives $/, \backslash,\cdot$, and propositional connectives $\wedge,\vee,\bot,\top$ and $\neg$.
The set of all $\mc{L}_{\mathsf{BFNL^*}}(\mathsf{Prop})$-formulae is defined by the following inductive rule:
\[
A::= p\mid \bot\mid A\wedge B \mid A\vee B\mid A\backslash B\mid A/B\mid A\cdot B,~~p\in\mathsf{Prop}.
\]
The set of all formula trees is defined by the rule
\[
\Gamma::= A\mid \Gamma\circ \Delta
\]
where $A$ is a $\mc{L}_{\mathsf{BFNL^*}}(\mathsf{Prop})$-formula. Each formula tree $\Gamma$ is associated with a formula $\varphi(\Gamma)$ defined recursively as follows: $\varphi(A) = A$; $\varphi(\Gamma\circ\Delta) = \varphi(\Gamma)\cdot\varphi(\Delta)$.

Sequents are of the form $\Gamma \Rightarrow A$ where $\Gamma$ is a formula tree and $A$ is a formula. By $\Phi\vdash_{S} \Gamma \Rightarrow A$ we mean sequent is derivable from $\Phi$ in system $S$. The sequent calculus $\mathsf{BFNL^*}$ consists the following axioms and rules:
 \begin{displaymath}
   (\mathrm{Id}) \quad A \Rightarrow A \quad \mathrm{(D)}\quad A\wedge (B\vee C) \Rightarrow (A\wedge B) \vee (A\wedge C).
   \end{displaymath}
\begin{displaymath}
    (\bot) \quad \Gamma[\bot]\Rightarrow A \quad(\top)\quad \Gamma \Rightarrow \top
    \end{displaymath}
   \begin{displaymath}
   \mathrm{(\neg 1)} \quad A\wedge \neg A \Rightarrow \bot\quad \mathrm{(\neg 2)} \quad\top \Rightarrow A\vee \neg A.
   \end{displaymath}
   \begin{displaymath}
   (\mathrm{\backslash L}) \quad\frac{\Delta \Rightarrow A  \quad \Gamma[B] \Rightarrow C}{\Gamma[\Delta \circ (A \backslash B)] \Rightarrow C} \quad(\mathrm{\backslash R})\quad \frac{A \circ \Gamma \Rightarrow B}{\Gamma \Rightarrow A \backslash B}
   \end{displaymath}
   \begin{displaymath}
   (\mathrm{/L})\quad \frac{\Gamma[A] \Rightarrow C\quad \Delta \Rightarrow B}{\Gamma[(A/B) \circ \Delta] \Rightarrow C} \quad (\mathrm{/R})\quad \frac{\Gamma \circ B \Rightarrow A}{\Gamma \Rightarrow A/B}
   \end{displaymath}
   \begin{displaymath}
   (\mathrm{\cdot L})\quad\frac{\Gamma[A \circ B] \Rightarrow C}{\Gamma[A \cdot B] \Rightarrow C} \quad (\mathrm{\cdot R}) \quad\frac{\Gamma \Rightarrow A \quad \Delta \Rightarrow B}{\Gamma \circ \Delta \Rightarrow A \cdot B}\quad
   (\mathrm{Cut}) \quad\frac{\Delta \Rightarrow A \quad \Gamma[A] \Rightarrow B}{\Gamma[\Delta] \Rightarrow B}
   \end{displaymath}
 \begin{displaymath}
   \mathrm{(\wedge L)}\quad\frac{\Gamma[A_i]\Rightarrow B}{\Gamma[A_1\wedge A_2]\Rightarrow B}{~(i=1,2)}\quad\mathrm{(\wedge R)}\quad\frac{\Gamma\Rightarrow A\quad \Gamma \Rightarrow B}{\Gamma \Rightarrow A\wedge B}
   \end{displaymath}
   \begin{displaymath}
   (\mathrm{\vee L})\quad\frac{\Gamma[A_1]\Rightarrow B\quad \Gamma[A_2]\Rightarrow B}{\Gamma[A_1\vee A_2] \Rightarrow B}\quad(\mathrm{\vee R)}\quad \frac{\Gamma \Rightarrow A_i}{\Gamma\Rightarrow A_1\vee A_2}{~(i=1,2)}
   \end{displaymath}
The $\Gamma$ in ($\backslash\mathrm{R}$) and ($\mathrm{/R}$) can be empty. Notice that the following facts hold in $\mathsf{BFNL^*}$:
\begin{itemize}
\item[(1)] $\vdash_{\mathsf{BFNL^*}} \neg \bot\Leftrightarrow \top $ and  $\vdash_{\mathsf{BFNL^*}} \neg \top \Leftrightarrow \bot$.
\item[(2)] $\vdash_{\mathsf{BFNL^*}} A\Leftrightarrow\neg\neg A$.
\item[(3)] $\vdash_{\mathsf{BFNL^*}}  \neg(A\wedge B)\Leftrightarrow \neg A\vee\neg B$ and $\vdash_{\mathsf{BFNL^*}} \neg(A\vee B)\Leftrightarrow \neg A\wedge\neg B$.
\item[(4)] $\vdash_{\mathsf{BFNL^*}} A\wedge (B\vee C) \Leftrightarrow (A\wedge B) \vee (A\wedge C)$ and $\vdash_{\mathsf{BFNL^*}} A\vee (B\wedge C) \Leftrightarrow (A\vee B) \wedge (A\vee C)$.
\item[(5)] $\vdash_{\mathsf{BFNL^*}} m\cdot(A\vee B)\Leftrightarrow (m\cdot A) \vee (m\cdot B)$.
\item[(6)] if $\vdash_{\mathsf{BFNL^*}} A\Imp B$, then $\vdash_{\mathsf{BFNL^*}} \neg B \Imp \neg A$.
\item[(7)] if $\vdash_{\mathsf{BFNL^*}} A\Rightarrow B$ then $\vdash_{\mathsf{BFNL^*}} \Rightarrow \neg A\vee B$
\item[(8)] if $\vdash_{\mathsf{BFNL^*}}  A\Leftrightarrow B$, then $\vdash_{\mathsf{BFNL^*}}  C\Leftrightarrow C'$ where $C'$ is obtained from $C$ by replacing one or more occurrences of $A$ by $B$ in $C$.
\end{itemize}
It is easy to prove (1),(2), (3),(4), (5),(6), and(8). Here we only show (7). Assume $A\Rightarrow B$. By ($\mathrm{\vee R}$), one gets $A\Rightarrow B\vee \neg A$. Since $\neg A\Imp B\vee \neg A$ is provable in $\mathsf{BFNL^*}$, by ($\mathrm{\vee L}$), one obtains $A\vee \neg A\Imp B\vee \neg A$. Then since $\Rightarrow \top$, $\top \Rightarrow A \vee \neg A$ are instances of axioms, by (Cut), one gets $\Imp \neg A\vee B$.

Moreover, $\mathsf{BFNL^*}$ admits the extended subformula property, i.e., if a sequent $\Gamma\Imp A$ is provable in $\mathsf{BFNL^*}$, then there exists a derivation of $\Gamma\Imp A$ such that all formulae appearing in the derivation belong to the set of all subformulae in $\Gamma\Imp A$ and closed under $\wedge$, $\vee$ and $\neg$.

There is also relational semantics for $\mathsf{BFNL^*}$ (\cite{KF13}). A $\mathsf{BFNL^*}$-model is a ternary relational model $\mf{J}=(W,R,V)$ where $W$ is a non-empty set of states, $R$ is a ternary relation over $W$, and $V$ is a valuation from $\mathsf{Prop}$ to the power set of $W$. The satisfiability relation $\mf{J}, u\models A$ between a relational model with a state and a $\mathsf{BFNL^*}$-formula is defined recursively as follows:
\begin{enumerate}
\item[] $\mf{J},u\models p$ iff $u\in V(p)$.
\item[] $\mf{J},u\not\models \bot$ and $\mf{J},u\models\top$.
\item[] $\mf{J}, u\models A\cdot B$, if there are $v, w\in W$ such that $R(u,v, w)$, $\mf{J}, v\models A$ and $\mf{J},w\models B$.
\item[] $\mf{J}, u\models A/B$, if for all $v, w\in W$ such that $R(w,u, v)$, $\mf{J}, v\models B$ implies $\mf{J},w\models A$
\item[] $\mf{J}, u\models A\backslash B$, if for all $v, w\in W$ such that $R(v,w, u)$, $\mf{J}, w\models A$ implies $\mf{J},v\models B$.
\item[] $\mf{J},u\models A\wedge B$ iff $\mf{J},u\models A$ and $\mf{J},u\models B$,
\item[] $\mf{J},u\models A\vee B$ iff $\mf{J},u\models A$ or $\mf{J},u\models B$.
\item[] $\mf{J},u\models \neg A$ iff $\mf{J},u\not\models A$.
\end{enumerate}
The notions of satisfiability, validity and semantic consequence relation are defined as usual (\cite{KF13}). By $\models_{\mathsf{BFNL^*}} A$ we mean that $A$ is valid in all $\mathsf{BFNL^*}$-models. For any sequent $\Gamma\Imp A$,  we say that $\Gamma\Imp A$ is true at a state $u$ in the model $\mf{J}$ (notation: $\mf{J},u\models\Gamma\Imp A$), if $\mf{J},u\models\varphi(\Gamma)$ implies $\mf{J},u\models A$. A sequent $\Gamma\Imp A$ is true in $\mf{J}$ (notation: $\mf{J}\models\Gamma\Imp A$), if $\mf{J},u\models\Gamma\Imp A$ for all states $u$ in $\mf{J}$.

The Hilbert style system for $\mathsf{BFNL^*}$ is equivalent to $\mathsf{PNL}$ in \cite{KF13}. From the results in \cite{KF13} that $\mathsf{BFNL^*}$ is sound and complete under the relational semantics. The following soundness theorem can be easily verified by induction on the length of derivation.
 \begin{thm}\label{thm:comBFNL^*}
For any $\mc{L}_{\mathsf{BFNL^*}}(\mathsf{Prop})$-formula $A$, if $\vdash_{\mathsf{BFNL^*}} \Rightarrow A$, then $\models_{\mathsf{BFNL^*}} A$.
 \end{thm}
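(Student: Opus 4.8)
The plan is to prove the soundness theorem by establishing a more general statement about sequents, namely that every derivable sequent is valid in the relational semantics, and then specialize to the case of an empty antecedent. The natural formulation to induct on is: if $\Phi \vdash_{\mathsf{BFNL^*}} \Gamma \Imp A$ (with $\Phi = \varnothing$ here), then $\models_{\mathsf{BFNL^*}} \Gamma \Imp A$, that is, $\mf{J}, u \models \varphi(\Gamma)$ implies $\mf{J}, u \models A$ at every state $u$ of every model $\mf{J}$. Since the statement to be proved concerns only $\Rightarrow A$, it follows as the special case where $\Gamma$ is empty and $\varphi(\Gamma)$ is vacuously satisfied, so that validity of the sequent reduces to $\models_{\mathsf{BFNL^*}} A$.

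The proof proceeds by induction on the length of the derivation. First I would check the axioms: $(\mathrm{Id})$ is immediate, $(\mathrm{D})$ follows from the fact that the semantic clauses for $\wedge$ and $\vee$ are the usual Boolean ones and distributivity holds in any powerset algebra, and $(\bot)$, $(\top)$, $(\neg 1)$, $(\neg 2)$ follow directly from the clauses $\mf{J}, u \not\models \bot$, $\mf{J}, u \models \top$, and $\mf{J}, u \models \neg A$ iff $\mf{J}, u \not\models A$. Then I would treat each inference rule, assuming the premises are true in every model at every state and deriving the same for the conclusion. The lattice rules $(\wedge L)$, $(\wedge R)$, $(\vee L)$, $(\vee R)$ are routine Boolean reasoning. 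The product and residuation rules require unwinding the ternary-relation clauses: for instance $(\mathrm{\cdot R})$ uses that if $\mf{J}, v \models \varphi(\Gamma)$ and $\mf{J}, w \models \varphi(\Delta)$ then $R(u,v,w)$ witnesses $\mf{J}, u \models \varphi(\Gamma) \cdot \varphi(\Delta) = \varphi(\Gamma \circ \Delta)$, and dually $(\mathrm{\backslash L})$, $(\mathrm{/L})$ exploit the universal clauses for $\backslash$ and $/$.

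The one step that demands genuine care is the handling of the rules that operate inside a context $\Gamma[\,\cdot\,]$, such as $(\mathrm{\backslash L})$, $(\mathrm{/L})$, $(\mathrm{\cdot L})$, the structural cases of $(\wedge L)$, $(\vee L)$, and especially $(\mathrm{Cut})$. To manage these I would first prove a \emph{monotonicity} (or context) lemma: if $\mf{J}, u \models \varphi(\Delta)$ implies $\mf{J}, u \models \varphi(\Delta')$ for all $u$, then the same implication holds with $\Delta$, $\Delta'$ replaced by $\Gamma[\Delta]$, $\Gamma[\Delta']$, proved by induction on the context $\Gamma[\,\cdot\,]$ using the clause for $\cdot$. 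This lemma reduces each context rule to the corresponding focused semantic fact, and makes the $(\mathrm{Cut})$ case immediate, since the left premise gives $\varphi(\Delta) \Imp A$ semantically and the context lemma then propagates the replacement of $A$ by $\Delta$ inside $\Gamma[\,\cdot\,]$.

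The main obstacle I anticipate is precisely this bookkeeping around formula-tree contexts: the semantic clauses are stated for formulae, while the rules manipulate trees, so the bridge is the identity $\varphi(\Gamma \circ \Delta) = \varphi(\Gamma) \cdot \varphi(\Delta)$ together with the $\cdot$-clause, and one must be careful that substitution into a context corresponds correctly to the nested existential quantification over the ternary relation. Once the context lemma is in place, however, every individual rule is a short verification, and the empty-antecedent specialization yields the stated theorem.
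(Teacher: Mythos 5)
Your overall strategy --- strengthening the claim to arbitrary sequents, inducting on the length of the derivation, and isolating a monotonicity lemma for contexts $\Gamma[\,\cdot\,]$ --- is exactly the route the paper gestures at (it offers no details beyond ``easily verified by induction on the length of derivation,'' deferring to the cited relational-semantics results). Your context lemma is correct and does take care of $(\mathrm{Cut})$, $(\backslash\mathrm{L})$, $(\mathrm{/L})$, $(\mathrm{\cdot L})$ and the other rules that act inside a context.

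There is, however, a genuine gap at the step you dismiss as routine: the rules $(\backslash\mathrm{R})$ and $(\mathrm{/R})$ in the case where $\Gamma$ is \emph{empty} --- which is precisely the feature that makes the system $\mathsf{BFNL^*}$ rather than $\mathsf{BFNL}$. Under your convention that an empty antecedent is ``vacuously satisfied,'' so that validity of $\Rightarrow A$ means $A$ is true at every state of every model, the passage from $A \Rightarrow B$ to $\Rightarrow A\backslash B$ is not sound for the bare ternary models defined in the paper. Concretely, $(\mathrm{Id})$ gives $p \Rightarrow p$, and $(\backslash\mathrm{R})$ with empty $\Gamma$ then yields $\vdash_{\mathsf{BFNL^*}} \Rightarrow p\backslash p$; but take $W=\{a,b,c\}$, $R=\{(a,b,c)\}$, $V(p)=\{b\}$. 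By the paper's clause for $\backslash$, we have $\mf{J},c\not\models p\backslash p$, since $R(a,b,c)$ and $\mf{J},b\models p$ while $\mf{J},a\not\models p$. Residuation makes these rules ``routine'' only for nonempty $\Gamma$, where $A\cdot\varphi(\Gamma)\Rightarrow B$ is valid iff $\varphi(\Gamma)\Rightarrow A\backslash B$ is valid; the empty case amounts instead to the law ``$1\leq A\backslash B$ iff $A\leq B$,'' and the implicit multiplicative unit $1$ has no interpretation in a bare ternary model. So your induction step fails there, and under your reading of empty-antecedent validity the statement you are proving is itself falsified by $p\backslash p$. Closing the gap requires saying what an empty antecedent means semantically (e.g., truth at designated unit states satisfying suitable frame conditions, in the spirit of the relational semantics the paper cites), not treating it as vacuous truth; this is the one point where the ``$^*$'' in $\mathsf{BFNL^*}$ does real work and cannot be glossed over.
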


\section{PSPACE-hard Decision Problem in $\mathsf{BFNL}^*$}
In this section, we reduce the validity problem of modal logic $\mathsf{K}$, which is PSPACE-complete, to the validity problem of $\mathsf{BFNL^*}$ so that we prove the PSPACE-hardness of the latter problem. Thus the PSPACE-hardness of the decision problem in $\mathsf{BFNL}^*$ follows.
Now let us consider the embedding of modal logic $\mathsf{K}$ into $\mathsf{BFNL^*}$. Let $P\subseteq Prop$ and $m\not\in P$ for a distinguished propositional letter. Define a function $(.)^\dag$: $\mc{L}_\mathsf{K}(\mathsf{P})\imp \mc{L}_{\mathsf{BFNL^*}}(\mathsf{P}\cup\{m\})$ recursively as follows:
\begin{displaymath}
p^\dag=p\quad \bot^\dag=\bot\quad
(A\wedge B)^\dag= A^\dag\wedge B^\dag \quad (A\vee B)^\dag= A^\dag\vee B^\dag
\end{displaymath}
\begin{displaymath}
(A\supset B)^\dag= \neg A^\dag \vee B^\dag \quad
(\neg A)^\dag= \neg A^\dag\quad
(\Diamond A)= m\cdot A^\dag
\end{displaymath}

Let $\mf{M}=(W,R,V)$ be a binary Kripke model with a valuation $V:\mathsf{Prop}\imp \wp(W)$. We define a $\mathsf{BFNL^*}$-model $\mf{J}^\mf{M}=(W',R',V')$  from $\mf{M}$ as follows:
\begin{enumerate}
\item[](1)~~$W'=\{w_1,w_2\mid w\in W\}$
\item[](2)~~$R'=\{\langle w_1, w_2, u_1\rangle | \langle w,u\rangle \in R\}$
\item[](3)~~$V'(p) = \{w_1,w_2\mid w\in V(p)\}$ for $p\in\mathsf{Prop}$; and $V'(m) = W'$.
\end{enumerate}
Intuitively, for each state $w$ in the binary model we make two copies $w_1$ and $w_2$, and then define the tenary relation among copies according to the original binary relation $R$. Note that the order of $w_1$ and $w_2$ makes sense in the ternary relation.

\begin{lem}\label{lemma:truth}
Let $\mf{M}=(W,R,V)$ be a binary Kripke model and $\mf{J}^\mf{M}=(W',R',V')$.
For any $w\in W$ and modal formula $A$, $\mf{M},w\models A$ iff $\mf{J}^\mf{M}, w_1\models A^\dag$.
\end{lem}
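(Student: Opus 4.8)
The plan is to proceed by induction on the structure of the modal formula $A$, establishing the biconditional $\mf{M},w\models A$ iff $\mf{J}^\mf{M},w_1\models A^\dag$ simultaneously for every state $w\in W$; quantifying over all $w$ in the induction hypothesis is essential, since the modal case will invoke the hypothesis at states other than $w$. For the base cases, when $A=p$ the claim reduces to $w\in V(p)$ iff $w_1\in V'(p)$, which is immediate from clause~(3) defining $V'$, and when $A=\bot$ both sides are false. The Boolean cases $A=B\wedge C$, $A=B\vee C$ and $A=\neg B$ are routine, since the relational semantics interprets $\wedge,\vee,\neg$ pointwise exactly as the modal semantics does, so each follows from the induction hypothesis applied to the immediate subformulae at $w_1$; the implication case $A=B\supset C$ additionally unfolds the translation $(B\supset C)^\dag=\neg B^\dag\vee C^\dag$ before combining the hypotheses for $B$ and $C$.

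The crucial case is $A=\D B$, where $(\D B)^\dag=m\cdot B^\dag$. First I would unfold $\mf{J}^\mf{M},w_1\models m\cdot B^\dag$ via the product clause: it holds iff there exist $v,x\in W'$ with $R'(w_1,v,x)$, $\mf{J}^\mf{M},v\models m$ and $\mf{J}^\mf{M},x\models B^\dag$. The key observation is that, by the definition of $R'$, every triple belonging to $R'$ has the shape $\la w_1,w_2,u_1\ra$ with $Rwu$; hence fixing the first coordinate to be $w_1$ forces $v=w_2$ and restricts $x$ to range exactly over $\{u_1\mid Rwu\}$. Moreover $\mf{J}^\mf{M},w_2\models m$ holds automatically because $V'(m)=W'$, so the middle conjunct imposes no constraint. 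Consequently $\mf{J}^\mf{M},w_1\models m\cdot B^\dag$ iff there is some $u$ with $Rwu$ and $\mf{J}^\mf{M},u_1\models B^\dag$, and by the induction hypothesis for $B$ at the state $u$ this is equivalent to the existence of $u$ with $Rwu$ and $\mf{M},u\models B$, that is, to $\mf{M},w\models\D B$.

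I expect the modal case to be the only genuine obstacle. The delicate point is verifying that the first argument $w_1$ of the ternary relation pins down the state $w$ and thereby forces the second argument to be its companion copy $w_2$, so that the operation $m\cdot(\cdot)$ faithfully mimics one step of the accessibility relation $R$. The two-copy construction together with the globally true letter $m$ are precisely what make this correspondence go through, and confirming that no spurious triples of $R'$ interfere is the heart of the verification.
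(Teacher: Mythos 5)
Your proposal is correct and follows essentially the same route as the paper's own proof: structural induction with the atomic and Boolean cases handled pointwise, and the $\D$ case resolved by matching triples $\la w_1,w_2,u_1\ra$ of $R'$ against $R$-successors, using that $V'(m)=W'$ makes the middle conjunct vacuous. If anything, your analysis of the right-to-left direction is slightly more explicit than the paper's, since you justify why a triple with first coordinate $w_1$ must have the form $\la w_1,w_2,u_1\ra$ with $Rwu$, a step the paper simply asserts.
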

\begin{proof}
By induction on the complexity of modal formula $A$.
The atomic and boolean cases are easy by the construction of $\mf{J}^\mf{M}$ and the inductive hypothesis. For $A=\D B$, assume
$\mf{M},w\models \D B$. Then there exists $u\in W$ such that $Rwu$ and $\mf{M},u\models B$. Since $Rwu$, we get $R'(w_1,w_2,u_1)$.
By inductive hypothesis, $\mf{J}^\mf{M},u_1\models B^\dagger$. Hence $\mf{J}^\mf{M},w_1\models m\cdot B^\dagger$.
Conversely, assume $\mf{J}^\mf{M},w_1\models m\cdot B^\dagger$. Then there exists $u_1\in W'$ such that $R'(w_1,w_2,u_1)$, $\mf{J}^\mf{M}, w_2\models m$ and $\mf{J}^\mf{M},u_1\models B^\dag$. By inductive hypothesis, $\mf{M},u\models B$. By the construction of $\mf{J}^\mf{M}$, we get $Rwu$. Hence $\mf{M},w\models \Diamond B$. \qed
\end{proof}

\begin{lem}\label{lemma:com}
For any modal formula $A$, if $\vdash_{\mathsf{BFNL^*}} \Imp A^\dag$, then $\vdash_{\mathsf{K}} A$.
\end{lem}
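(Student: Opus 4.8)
The plan is to prove the contrapositive by a purely semantic argument, gluing together the completeness of $\mathsf{K}$, the truth-transfer Lemma~\ref{lemma:truth}, and the soundness of $\mathsf{BFNL^*}$ (Theorem~\ref{thm:comBFNL^*}). Concretely, I would establish: if $\not\vdash_{\mathsf{K}} A$, then $\not\vdash_{\mathsf{BFNL^*}} \Imp A^\dag$.

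First I would invoke the completeness of $\mathsf{K}$ recalled above. From $\not\vdash_{\mathsf{K}} A$ we obtain that $A$ is not valid, so there exist a binary Kripke model $\mf{M}=(W,R,V)$ and a state $w\in W$ with $\mf{M},w\not\models A$. Here we may take $V$ to interpret only the propositional letters occurring in $A$, all of which lie in $P$, so that the translation $(.)^\dag$ and the construction of $\mf{J}^\mf{M}$ apply directly.

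Next I would pass to the ternary relational model $\mf{J}^\mf{M}=(W',R',V')$ defined before Lemma~\ref{lemma:truth}; this is a genuine $\mathsf{BFNL^*}$-model, since $W'$ is nonempty and $V'(m)=W'$. Applying Lemma~\ref{lemma:truth} to the formula $A$ at the state $w$, the failure $\mf{M},w\not\models A$ yields $\mf{J}^\mf{M},w_1\not\models A^\dag$. Hence $A^\dag$ is falsified at a state of a $\mathsf{BFNL^*}$-model, and therefore $\not\models_{\mathsf{BFNL^*}} A^\dag$.

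Finally I would close by the contrapositive of the soundness theorem (Theorem~\ref{thm:comBFNL^*}): since $\not\models_{\mathsf{BFNL^*}} A^\dag$, we conclude $\not\vdash_{\mathsf{BFNL^*}} \Imp A^\dag$, which is exactly the contrapositive of the claim. I do not expect a genuine obstacle here, as all the real work was already done in establishing Lemma~\ref{lemma:truth} and in the soundness half of the relational completeness of $\mathsf{BFNL^*}$. The only points requiring care are bookkeeping ones: checking that $\mf{J}^\mf{M}$ really is an admissible $\mathsf{BFNL^*}$-model (in particular that the auxiliary letter $m$, true everywhere, correctly simulates $\D$ through the clause $(\D A)^\dag = m\cdot A^\dag$), and keeping the directions of the two metatheorems straight, so that the chain ``falsifying $\mathsf{K}$-model $\leadsto$ falsifying $\mathsf{BFNL^*}$-model $\leadsto$ unprovability in $\mathsf{BFNL^*}$'' runs the intended way.
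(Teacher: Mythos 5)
Your proof is correct and follows exactly the paper's own argument: the contrapositive via completeness of $\mathsf{K}$, transfer of falsity through Lemma~\ref{lemma:truth} to the model $\mf{J}^\mf{M}$, and the contrapositive of the soundness theorem (Theorem~\ref{thm:comBFNL^*}) for $\mathsf{BFNL^*}$. The only difference is that you spell out the bookkeeping (nonemptiness of $W'$, the role of $V'(m)=W'$) more explicitly than the paper does.
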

\begin{proof}
Assume $\not\vdash_{\mathsf{K}} A$. Then there is a binary Kripke model $\mf{M}$ such that $\mf{M}\not\models A$. By lemma \ref{lemma:truth}, $\mf{J}^\mf{M} \not\models A^\dag$. Hence, by theorem \ref{thm:comBFNL^*}, we get $\not\vdash_{\mathsf{BFNL^*}} \Imp A^\dag$. \qed
\end{proof}

\begin{lem}\label{lemma:KBFNL}
For any modal formula $A$, if $\vdash_\mathsf{K} A$, then $\vdash_{\mathsf{BFNL^*}} \Rightarrow A^\dag$.
\end{lem}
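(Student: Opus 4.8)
The plan is to argue by induction on the length of a derivation of $A$ in $\mathsf{K}$, showing at each step that the translation $A^\dag$ is valid in every $\mathsf{BFNL^*}$-model, i.e. $\models_{\mathsf{BFNL^*}} A^\dag$; the desired conclusion $\vdash_{\mathsf{BFNL^*}}\Imp A^\dag$ then follows by completeness of $\mathsf{BFNL^*}$ with respect to the relational semantics. In this way the whole proof reduces to checking that the $(.)^\dag$-image of each axiom of $\mathsf{K}$ is valid and that the rules (MP) and (Nec) preserve validity, all of which can be carried out pointwise in an arbitrary model $\mf{J}=(W,R,V)$.

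For a propositional tautology $A$, I would observe that at every state $u$ the connectives $\wedge,\vee,\neg,\bot$ are evaluated two-valuedly and that $(B\supset C)^\dag=\neg B^\dag\vee C^\dag$ mimics classical implication; hence, reading each propositional letter and each maximal $\D$-subformula of $A$ as an atom, the truth value of $A^\dag$ at $u$ equals the classical value of the tautology $A$ under the induced assignment, which is true, so $A^\dag$ holds at every $u$. For the axiom $\B(B\supset C)\supset(\B B\supset\B C)$, writing $b=B^\dag$ and $c=C^\dag$ and using the De Morgan and double negation facts (2) and (3), its translation is $\mathsf{BFNL^*}$-equivalent to $(m\cdot(b\wedge\neg c))\vee(m\cdot\neg b)\vee\neg(m\cdot\neg c)$. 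To see this is valid, fix $u$ and suppose $\mf{J},u\models m\cdot\neg c$ (otherwise the last disjunct already holds); then there are $v,w$ with $R(u,v,w)$, $\mf{J},v\models m$ and $\mf{J},w\not\models c$, and a case split on whether $\mf{J},w\models b$ yields, via the very same $v,w$, either $\mf{J},u\models m\cdot(b\wedge\neg c)$ or $\mf{J},u\models m\cdot\neg b$. Finally (MP) preserves validity because, at each state, $\mf{J},u\models b$ together with $\mf{J},u\models\neg b\vee c$ forces $\mf{J},u\models c$.

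The crux is the rule (Nec): from the inductive hypothesis that $B^\dag$ is valid I must derive validity of $(\B B)^\dag=\neg(m\cdot\neg B^\dag)$, and this is the only place where the interaction between the product, the distinguished letter $m$, and Boolean negation is essential. Fixing any $\mf{J},u$ and supposing for contradiction $\mf{J},u\models m\cdot\neg B^\dag$, there are $v,w$ with $R(u,v,w)$ and $\mf{J},w\models\neg B^\dag$, i.e. $\mf{J},w\not\models B^\dag$, contradicting validity of $B^\dag$; hence $\mf{J},u\models\neg(m\cdot\neg B^\dag)$ for every $u$, so $(\B B)^\dag$ is valid and completeness closes the case. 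I expect this step to be the main obstacle, since it is the single point where the modal rule genuinely meets the substructural product. A purely syntactic rendering would instead require deriving $\top\Imp B^\dag$ (equivalently $\neg B^\dag\Imp\bot$) from $\Imp B^\dag$ inside $\mathsf{BFNL^*}$ and then reducing $m\cdot\neg B^\dag\Imp\bot$ by means of $(\mathrm{\cdot L})$ and the $(\bot)$ axiom, a detour whose delicate point is exactly the passage from the empty antecedent to $\top$ that the semantic argument bypasses.
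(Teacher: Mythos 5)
Your verification steps are all correct, but your proof takes a genuinely different route from the paper's, and the difference matters. The paper's proof is purely syntactic: it derives the $\dag$-translation of each axiom of $\mathsf{K}$ and simulates (MP) and (Nec) inside the sequent calculus itself, using only the rules and Facts (1)--(8); no semantics is invoked in this direction at all. You instead verify, by correct pointwise arguments, that $A^\dag$ is valid in every $\mathsf{BFNL^*}$-model whenever $\vdash_\mathsf{K} A$, and then convert validity back into derivability. That conversion is the real crux of your argument, and it is not free: it needs the completeness half of the relational-semantics correspondence, i.e.\ $\models_{\mathsf{BFNL^*}} A^\dag$ implies $\vdash_{\mathsf{BFNL^*}} \Imp A^\dag$. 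The paper never establishes this: it states and proves only the soundness half (Theorem \ref{thm:comBFNL^*}), which is all that the converse direction (Lemma \ref{lemma:com}) requires, and mentions completeness only in passing as imported from \cite{KF13} for an equivalent Hilbert-style system. That import is less innocent than it looks for this exact formulation: because $\Gamma$ may be empty in $(\backslash\mathrm{R})$, the sequent $\Imp p\backslash p$ is derivable, yet $p\backslash p$ is not true at every state of every model under the paper's stated definition of validity, so the fit between the empty-antecedent sequent calculus and the unit-free ternary models is delicate, and black-boxing ``sound and complete'' is risky (your translated formulas avoid $\backslash$ and $/$, but the completeness theorem you invoke concerns the full language). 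In short, your route buys much simpler case analyses --- pointwise Boolean reasoning instead of sequent derivations, with (Nec) becoming a two-line contradiction --- at the price of resting on a deep external theorem that the paper's proof is specifically engineered to avoid. Incidentally, the ``purely syntactic rendering'' you sketch at the end, deriving $\neg B^\dag \Imp \bot$ from $\Imp B^\dag$ and then collapsing $m\cdot\neg B^\dag$ via $(\cdot\mathrm{L})$ and $(\bot)$, is essentially word for word the paper's own treatment of (Nec), so the ``detour'' you were wary of is exactly the intended proof.
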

\begin{proof}
We proceed by induction on the length of proof in $\mathsf{K}$. It suffices to show all axioms and inference rules of $\mathsf{K}$ are admissible in $\mathsf{BFNL^*}$ w.r.t the translation $\dag$. Obviously the translations of all instances of propositional tautologies are provable in $\mathsf{BFNL^*}$.
Consider $(\Box(A\supset B)\supset (\Box A\supset \Box B))^\dag=m\cdot(A\wedge \neg B)\vee (m\cdot(\neg A))\vee(\neg (m\cdot(\neg B))$. Since $A\wedge B\Rightarrow B$, by Fact (6), one gets $\neg B \Rightarrow (\neg A \vee \neg B)$. Hence by monotonicity of $\cdot$, one gets $m\cdot (\neg B)\Rightarrow (m\cdot(\neg A\vee \neg B))$. Then by Fact (7), one gets $\Rightarrow \neg(m\cdot (\neg B))\vee (m\cdot(\neg A\vee \neg B))$. Since $(A\vee \neg A)\Leftrightarrow \top$ are instances of axioms, by Fact (8), one gets $(m\cdot(\neg A\vee \neg B))\Leftrightarrow m\cdot((A\vee \neg A)\wedge(\neg A\vee \neg B))$.
By Fact (4) and (8), one gets $m\cdot((A\vee \neg A)\wedge(\neg A\vee \neg B))\Leftrightarrow m\cdot((A\wedge\neg B)\vee \neg A)$. Again, by Fact (5), one can prove $ m\cdot((A\wedge\neg B)\vee \neg A)\Leftrightarrow (m\cdot(A\wedge \neg B))\vee ( (m\cdot (\neg A)))$. Hence one gets $\Rightarrow m\cdot(A\wedge \neg B)\vee  (m\cdot (\neg A))\vee (\neg (m\cdot (\neg B)))$.

Let us consider the rule (MP). Assume $\vdash_{\mathsf{BFNL^*}} \Rightarrow A^\dag$ and $\vdash_{\mathsf{BFNL^*}}\Rightarrow  (A\supset B)^\dag$, which is equal to $\vdash_{\mathsf{BFNL^*}}\Rightarrow \neg (A^\dag)\vee B^\dag$. We need to show $\vdash_{\mathsf{BFNL^*}} \Rightarrow  B^\dag$. By ($\neg 1$), ($\bot$) and ($\mathrm{Cut}$), one gets $ A^\dag\wedge \neg (A^\dag) \Rightarrow  B^\dag$. By $\mathrm{(\wedge L)}$,  one gets $A^\dag\wedge B^\dag \Rightarrow  B^\dag$. Then, by ($\mathrm{\vee L}$), one gets $(A^\dag\wedge \neg (A^\dag))\vee (A^\dag\wedge B^\dag) \Rightarrow B^\dag$. Then by ($\mathrm{D}$) and ($\mathrm{Cut}$), one gets $ A^\dag\wedge(\neg (A^\dag)\vee B^\dag) \Rightarrow B^\dag$. Clearly, by assumptions and ($\mathrm{\wedge R}$), one gets $\Rightarrow  A^\dag\wedge(\neg A^\dag\vee B^\dag)$, which yields $\Rightarrow B^\dag$ by ($\mathrm{Cut}$).

Finally consider the rule (Nec). Assume $\vdash_{\mathsf{BFNL^*}} \Rightarrow A^\dag$. We need to show $\vdash_{\mathsf{BFNL^*}} \Rightarrow \neg (m\cdot(\neg A^\dag))$. Then by ($\top$) and Fact (1) and (6), one gets $\neg (A^\dag)\Rightarrow \bot$. By $(\cdot\mathrm{R})$, one gets $m\cdot\neg (A^\dag) \Rightarrow m\cdot \bot$. Then by ($\bot$), ($\mathrm{\cdot L}$) and ($\mathrm{Cut}$), one gets $m\cdot \neg (A^\dag) \Rightarrow \bot$. Hence by ($\neg R$), one gets $\Rightarrow \neg(m\cdot(\neg A^\dag))$. \qed
\end{proof}

Lemma \ref{lemma:com} and Lemma \ref{lemma:KBFNL} lead to the following theorem.
\begin{thm}
$\vdash_\mathsf{K} A$ iff $\vdash_{\mathsf{BFNL^*}} A^\dag$
\end{thm}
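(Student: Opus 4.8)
The plan is to obtain the theorem as the immediate conjunction of the two preceding lemmas, since the two implications of the biconditional are precisely their statements. First I would record the notational point that $\vdash_{\mathsf{BFNL^*}} A^\dag$ is simply shorthand for the provability of the sequent $\Rightarrow A^\dag$, so that the translation target appearing in the lemmas and in the theorem coincide; without this identification the statements would not visibly match up.

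For the left-to-right direction I would assume $\vdash_{\mathsf{K}} A$ and apply Lemma \ref{lemma:KBFNL}, which directly yields $\vdash_{\mathsf{BFNL^*}} \Rightarrow A^\dag$, i.e. $\vdash_{\mathsf{BFNL^*}} A^\dag$. For the right-to-left direction I would assume $\vdash_{\mathsf{BFNL^*}} A^\dag$, that is $\vdash_{\mathsf{BFNL^*}} \Rightarrow A^\dag$, and invoke Lemma \ref{lemma:com} to conclude $\vdash_{\mathsf{K}} A$. Putting the two implications together gives the biconditional, so the proof is a one-line combination.

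Since the genuine content is already discharged inside the two lemmas---Lemma \ref{lemma:com} via the relational semantics, the model transformation $\mf{M}\mapsto\mf{J}^\mf{M}$ of Lemma \ref{lemma:truth}, and soundness (Theorem \ref{thm:comBFNL^*}), and Lemma \ref{lemma:KBFNL} via an induction on $\mathsf{K}$-derivations verifying that each axiom and rule is admissible under $\dag$---the theorem itself presents no real obstacle; the only thing to be careful about is the notational match above. The point worth stating explicitly afterwards is the complexity payoff: the translation $(.)^\dag$ is given by a purely local recursion in which each connective is replaced by a formula of bounded size, so $A^\dag$ is computable from $A$ in polynomial (indeed linear) time. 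Hence $\dag$ is a polynomial many-one reduction from the validity problem of $\mathsf{K}$ to the provability problem of $\mathsf{BFNL^*}$, and by Ladner's Theorem (Theorem \ref{theorem:lander}) the decision problem for $\mathsf{BFNL^*}$ is PSPACE-hard.
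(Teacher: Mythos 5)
Your proof is correct and matches the paper's own argument exactly: the paper obtains this theorem as an immediate consequence of Lemma \ref{lemma:com} and Lemma \ref{lemma:KBFNL}, just as you do. Your closing remarks on the polynomial-time nature of $(.)^\dag$ and the appeal to Ladner's Theorem also mirror the paper's subsequent derivation of PSPACE-hardness.
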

Obviously the reduction is in polynomial-time. Now by Lardner's theorem (\ref{theorem:lander}), one gets the following theorem.

\begin{thm}
The validity problem of  $\mathsf{BFNL}^*$ is PSPACE-hard.
\end{thm}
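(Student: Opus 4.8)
The plan is to obtain the result as an immediate corollary of the embedding just established, combined with Ladner's Theorem (Theorem~\ref{theorem:lander}). Recall the standard principle: to show that a decision problem $X$ is PSPACE-hard it suffices to exhibit a PSPACE-hard problem $Y$ together with a polynomial-time many-one reduction from $Y$ to $X$, since polynomial-time reductions compose and therefore hardness is inherited. Here I take $Y$ to be the validity problem of modal logic $\mathsf{K}$, which is PSPACE-hard by Ladner's Theorem, and $X$ to be the validity problem of $\mathsf{BFNL^*}$; the candidate reduction is the translation $(.)^\dag$.

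First I would check that $(.)^\dag$ really is a many-one reduction from $\mathsf{K}$-validity to $\mathsf{BFNL^*}$-validity. By the soundness and completeness of $\mathsf{K}$, a modal formula $A$ is $\mathsf{K}$-valid iff $\vdash_\mathsf{K} A$; and the validity problem of $\mathsf{BFNL^*}$ asks, given the formula $A^\dag$, whether $\vdash_{\mathsf{BFNL^*}} \Rightarrow A^\dag$. The preceding theorem states precisely that $\vdash_\mathsf{K} A$ iff $\vdash_{\mathsf{BFNL^*}} \Rightarrow A^\dag$. Hence $A$ is a positive instance of $\mathsf{K}$-validity exactly when $A^\dag$ is a positive instance of $\mathsf{BFNL^*}$-validity, which is exactly the correctness condition for a many-one reduction.

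Next I would confirm that $(.)^\dag$ is computable in polynomial (indeed linear) time. The translation is a single structural recursion over the parse tree of $A$: the atomic cases and the clauses for $\wedge$ and $\vee$ are size-preserving, the clause for $\supset$ replaces one connective by the fixed pattern built from $\neg$ and $\vee$, and the clause for $\Diamond$ replaces it by the fixed gadget $m\cdot(\cdot)$. Each node of $A$ thus contributes only a bounded number of symbols to $A^\dag$, so $|A^\dag|$ is at most a constant multiple of $|A|$, and $A^\dag$ is produced by one traversal of the parse tree; the whole computation runs in time polynomial in $|A|$.

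Composing the two observations gives the theorem. Since $\mathsf{K}$-validity is PSPACE-hard (Theorem~\ref{theorem:lander}) and $(.)^\dag$ is a polynomial-time many-one reduction from it to $\mathsf{BFNL^*}$-validity, the latter is PSPACE-hard as well. There is no genuinely hard step; the only point needing care is to make explicit that the translation inflates formula length by at most a constant factor, guaranteeing that the reduction stays polynomial, and this is immediate from the defining clauses of $(.)^\dag$.
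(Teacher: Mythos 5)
Your proposal is correct and follows exactly the paper's route: the paper likewise obtains this theorem as an immediate consequence of the preceding embedding theorem ($\vdash_\mathsf{K} A$ iff $\vdash_{\mathsf{BFNL^*}} \Rightarrow A^\dag$), the observation that the translation $(.)^\dag$ is computable in polynomial time, and Ladner's Theorem~\ref{theorem:lander}. Your write-up merely makes explicit the routine details (many-one reduction correctness via soundness and completeness of $\mathsf{K}$, and the linear size bound on $A^\dag$) that the paper compresses into ``Obviously the reduction is in polynomial-time.''
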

\begin{thm}\label{theorem:comBFNL}
The decision problem in $\mathsf{BFNL}^*$ is PSPACE-hard.
\end{thm}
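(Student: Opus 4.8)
The plan is to read this final theorem as an immediate corollary of the preceding one, and to spend the proof establishing the single conceptual link between the two: that the \emph{validity} problem is literally a restriction of the \emph{decision} (derivability) problem. First I would fix the two problems precisely. The decision problem in $\mathsf{BFNL^*}$ is the problem of deciding, for an arbitrary sequent $\Gamma \Rightarrow A$, whether $\vdash_{\mathsf{BFNL^*}} \Gamma \Rightarrow A$. The validity problem is the problem of deciding, for a formula $A$, whether $\models_{\mathsf{BFNL^*}} A$. By the soundness theorem (Theorem~\ref{thm:comBFNL^*}) together with the completeness of $\mathsf{BFNL^*}$ with respect to the relational semantics of \cite{KF13}, we have $\models_{\mathsf{BFNL^*}} A$ iff $\vdash_{\mathsf{BFNL^*}} \Rightarrow A$. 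Crucially, since $\mathsf{BFNL^*}$ admits empty antecedents, $\Rightarrow A$ is a legitimate sequent, so the validity problem coincides with the derivability problem restricted to sequents whose antecedent is empty.

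Next I would make the reduction explicit. The map sending a formula $A$ to the sequent $\Rightarrow A$ is computable in linear time, and by the equivalence just stated it carries each instance of the validity problem to an equivalent instance of the decision problem. Hence the validity problem reduces in polynomial (indeed linear) time to the decision problem for $\mathsf{BFNL^*}$-sequents. Since the preceding theorem establishes that the validity problem of $\mathsf{BFNL^*}$ is PSPACE-hard, and PSPACE-hardness is preserved under polynomial-time many-one reductions, the decision problem is PSPACE-hard as well. In fact the PSPACE-hard instances were already produced directly as sequents in Section~2: by the embedding theorem ($\vdash_\mathsf{K} A$ iff $\vdash_{\mathsf{BFNL^*}} A^\dag$), deciding derivability of the empty-antecedent sequents $\Rightarrow A^\dag$ already solves the validity problem of $\mathsf{K}$, which is PSPACE-hard by Ladner's Theorem (Theorem~\ref{theorem:lander}); these sequents are \emph{a fortiori} instances of the general decision problem, so no further reduction is even required.

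I do not expect a genuine obstacle, as the statement is essentially a corollary; the only points requiring care are bookkeeping. One is the legitimacy of identifying validity with provability of $\Rightarrow A$, which rests on completeness and is therefore available from \cite{KF13}. The other is confirming that the composite reduction from $\mathsf{K}$-validity to $\mathsf{BFNL^*}$-derivability stays polynomial, i.e.\ that the translation $(\cdot)^\dag$ increases formula size only polynomially (in fact linearly); this was already noted in the remark that the reduction is polynomial-time following the embedding theorem. With these two observations in place, the theorem follows immediately.
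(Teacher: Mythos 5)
Your proposal is correct and follows essentially the same route as the paper: the paper also obtains this theorem as an immediate consequence of the embedding theorem ($\vdash_\mathsf{K} A$ iff $\vdash_{\mathsf{BFNL^*}} \Rightarrow A^\dag$) together with Ladner's theorem, since the empty-antecedent sequents $\Rightarrow A^\dag$ are themselves instances of the decision problem and the translation is polynomial. Your second paragraph's observation that this direct route needs only soundness of $\mathsf{BFNL^*}$ (completeness being required only for the separate validity-problem formulation) is exactly the bookkeeping the paper leaves implicit.
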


\begin{remark}
The embedding function $(.)^\flat$ in \cite{Ver1} is also defined to translate the behaviour of $\Diamond$ in term of $\cdot$, which is used in \cite{ja2} to prove the context-freeness of $\mathsf{L(\Diamond)}$ ($\mathsf{L}$ enriched with an unary modal operation and its residual $\Box^\downarrow$). The embedding function $(.)^\flat$ differs from our $(.)^\dag$ in the following two clauses: $(\Diamond A)^\flat= m\cdot A^\flat\cdot n$ and $(\Box^\downarrow A)^\flat =m\backslash A^\flat/n$. It requires two arguments $m, n$ to translate the behaviour of $\Diamond$ since the modal fomulae under consideration contain $\backslash$ or $/$ and the systems admit associativity, while both cases do not occur in our setting.
\end{remark}

\section{PSPACE-hard Decision Problem in $\mathsf{DFNL}(\mathrm{\Phi})$}
In this section, we prove that $\mathsf{DFNL}(\mathrm{\Phi})$  has PSPACE-hard decision problem. In what follows, we assume that $\Phi$ is a finite set of simple sequents, i.e., sequents of the form $A\Imp B$ where $A,B$ are formulae. $T$ denotes a set of formulae. By a $T$-sequent we mean a sequent such that all formulae occurring in it belong to $T$. We write $\Phi\vdash_{S} \Gamma \Rightarrow_T A$, if $\Gamma \Rightarrow A$ has a deduction from $\Phi$ in the system $S$ consisting of $T$-sequents only. 

Our first step of reduction is a polynominal one from $\mathsf{BFNL}^*$ to $\mathsf{BDFNL}^*\mathrm{(\Phi)}$ (i.e., bounded distributive full nonassociative Lambek calculus enriched with assumptions). Let us introduce some notions first.

Let $T$ be a set of formulae containing $\top$ and $\bot$ and closed under taking subformulae.  By $c(T)$ we mean the closure of $T$ under $\vee$ and $\wedge$. It is obvious that $c(T)$ is closed under taking subformulae. We define $T^\sim=T\cup\{p_B| B\in T\}$. Furthermore, we define the function $(.)^\sim$ : $c(T)\hookrightarrow c(T^\sim)$ inductively as follows:
\begin{itemize}
\item[(1)] $\top^\sim=\bot$ and $\bot^\sim=\top$;
\item[(2)] $A^\sim=p_A$ for $A\in T$ and $A\not= \top, \bot$;
\item[(3)] $ (A\wedge B)^\sim =A^\sim \vee B^\sim$ and $(A\vee B)^\sim =A^\sim\wedge B^\sim$.
\end{itemize}
Define $\Psi[T] = \{A\wedge p_A\Rightarrow \bot \mid A\in T\}\cup\{A\vee p_A\Rightarrow \top\mid A\in T\}$.

\begin{lem}\label{lemma:demogan}
For any formula $A\in c(T)$, $\Psi[T] \vdash_{\mathsf{BDFNL^*}} A\wedge A^\sim\Rightarrow_{c(T)} \bot$ and $\Psi[T]$ $ \vdash_{\mathsf{BDFNL^*}} A\vee A^\sim\Rightarrow_{c(T)} \top$.
\end{lem}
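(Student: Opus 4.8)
The plan is to prove both statements simultaneously by induction on the complexity of the formula $A\in c(T)$, since the two claims feed into each other across the $\wedge/\vee$ cases through the De Morgan clauses of $(.)^\sim$. First I would set up the induction, noting that $c(T)$ is built from the formulae of $T$ by $\wedge$ and $\vee$, so the base cases are exactly the members of $T$ (including $\top$ and $\bot$), and the inductive cases are $A\wedge B$ and $A\vee B$ with $A,B\in c(T)$.

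For the base cases I would argue as follows. When $A=\top$, we have $A^\sim=\bot$, so $A\wedge A^\sim=\top\wedge\bot$ and $A\vee A^\sim=\top\vee\bot$; both required sequents follow immediately from the axioms $(\bot)$ and $(\top)$ together with $(\wedge\mathrm{L})$ and $(\vee\mathrm{R})$, and dually for $A=\bot$. When $A\in T$ with $A\neq\top,\bot$, we have $A^\sim=p_A$, so the two goals become $A\wedge p_A\Rightarrow\bot$ and $A\vee p_A\Rightarrow\top$ — but these are precisely the assumptions placed in $\Psi[T]$, so each holds by a single use of the corresponding member of $\Psi[T]$. The only care needed here is to check the $c(T)$-boundedness side condition: every formula occurring in these short derivations ($A$, $p_A$, $\bot$, $\top$, and their $\wedge/\vee$ combinations) lies in $c(T^\sim)$, or more precisely in the intended bounding set, so the $\Rightarrow_{c(T)}$ annotation is respected.

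For the inductive step, consider $A\wedge B$ with $(A\wedge B)^\sim=A^\sim\vee B^\sim$. To show $\Psi[T]\vdash (A\wedge B)\wedge(A^\sim\vee B^\sim)\Rightarrow\bot$, I would distribute using $(\mathrm{D})$ (or Fact (4)) to rewrite the antecedent as $((A\wedge B)\wedge A^\sim)\vee((A\wedge B)\wedge B^\sim)$, handle the two disjuncts by $(\vee\mathrm{L})$, and in each disjunct use $(\wedge\mathrm{L})$ to expose $A\wedge A^\sim$ (respectively $B\wedge B^\sim$), which reduces to $\bot$ by the induction hypothesis and $(\mathrm{Cut})$. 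Dually, for $A\vee A^\sim$-style goal $(A\wedge B)\vee(A^\sim\vee B^\sim)\Rightarrow\top$, the conclusion is $\top$, so it follows directly from the axiom $(\top)$ without even invoking the hypothesis. The case $A\vee B$, where $(A\vee B)^\sim=A^\sim\wedge B^\sim$, is entirely symmetric: the $\vee A^\sim$ goal is trivial by $(\top)$, while the $\wedge A^\sim$ goal uses distributivity and $(\vee\mathrm{L})$ in the mirror-image fashion.

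I expect the main obstacle to be bookkeeping rather than conceptual: one must verify at each step that every intermediate sequent stays within $c(T)$ (so that the bounded derivability $\Rightarrow_{c(T)}$ genuinely holds, not merely ordinary derivability), and one must invoke the distributive law $(\mathrm{D})$ and its converse carefully, checking that the formulae produced by distribution are still in the closure. Since $c(T)$ is closed under $\wedge$ and $\vee$ and the translation $(.)^\sim$ maps $c(T)$ into $c(T^\sim)$, these side conditions hold throughout, but they are the part that requires genuine attention; the logical skeleton of each case is a routine distribute-then-split-then-cut argument.
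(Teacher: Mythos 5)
Your treatment of the first claim ($A\wedge A^\sim\Rightarrow_{c(T)}\bot$) is essentially the paper's own proof: the same induction, with the base cases discharged by the assumptions in $\Psi[T]$ (and the axioms for $\top,\bot$), and the inductive steps done by distributing with $(\mathrm{D})$, splitting with $(\vee\mathrm{L})$, projecting out the contradictory pair, and cutting with the induction hypothesis. The paper merely routes the $\wedge$-case through intermediate sequents such as $B\wedge(B^\sim\vee C^\sim)\Rightarrow B\wedge C^\sim$; the substance is identical. One small imprecision: your phrase ``use $(\wedge\mathrm{L})$ to expose $A\wedge A^\sim$'' is not literally possible, since $(\wedge\mathrm{L})$ replaces a whole leaf formula by a conjunction containing it and so cannot rewrite $(A\wedge B)\wedge A^\sim$ into $A\wedge A^\sim$; you need $(\wedge\mathrm{L})$ on identity axioms, then $(\wedge\mathrm{R})$, then $(\mathrm{Cut})$ with the hypothesis, which is exactly the combination the paper cites.

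Where you genuinely diverge is the second claim, and this is the point that deserves attention. You observe that any sequent with $\top$ on the right is an instance of the axiom $(\top)$, so $A\vee A^\sim\Rightarrow\top$ is trivial. Against the statement as printed, that observation is correct, and it is sharper than the paper's ``by similar arguments.'' But it should have made you suspicious: a one-line axiom instance would not be bundled into this lemma, nor would $\Psi[T]$ need to contain the assumptions $A\vee p_A\Rightarrow\top$ at all. The printed direction is in fact a typo. This lemma is invoked in Lemma \ref{Lemma:embbdfnl} precisely to derive the $\ddagger$-translation of the axiom $(\neg 2)$, which is $\top\Rightarrow A^\ddagger\vee(A^\ddagger)^\sim$; so what is actually needed is $\top\Rightarrow_{c(T^\sim)}A\vee A^\sim$, with $\Psi[T]$ containing $\top\Rightarrow A\vee p_A$. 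That direction is not an axiom instance, and it is not derivable from the assumptions as printed (in a bounded distributive lattice model interpret $A$ strictly between the bounds and $p_A$ as the bottom: both printed assumptions hold, yet $\top\leq A\vee p_A$ fails). Proving it requires the genuine mirror-image induction --- base case from the assumption $\top\Rightarrow A\vee p_A$, inductive step via $(\wedge\mathrm{R})$ on the two hypotheses, the distributive law, $(\wedge\mathrm{L})$, $(\vee\mathrm{R})$ and $(\mathrm{Cut})$ --- which is what the paper's ``similar arguments'' refers to. So your proposal proves the literal statement, but the shortcut on the $\top$-half proves a vacuous claim relative to what the paper needs downstream; you should carry out the dual induction there (and, as you correctly hedged, the bounding set throughout should be $c(T^\sim)$ rather than $c(T)$).
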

\begin{proof}
We proceed by induction on the complexity of formula $A$. Assume $A\in T$. Then the claim obviously holds. Assume $A=B\wedge C$. Then $A^\sim=(B\wedge C)^\sim=B^\sim \vee C^\sim$. By inductive hypothesis, $\vdash_{\mathsf{BDFNL^*}}B\wedge B^\sim \Rightarrow_{c(T)} \bot$ and $\vdash_{\mathsf{BDFNL^*}} C\wedge C^\sim \Rightarrow_{c(T)} \bot$, whence by ($\bot$) and ($\mathrm{Cut}$), one gets $B\wedge B^\sim\Rightarrow_{c(T)} B\wedge C^\sim$ and $C\wedge C^\sim \Rightarrow_{c(T)} C\wedge B^\sim$. Hence by applying ($\vee L$) to the former one and $B\wedge C^\sim \Rightarrow_{c(T)} B\wedge C^\sim$, one obtains $(B\wedge B^\sim)\vee (B \wedge C^\sim) \Rightarrow_{c(T)} B\wedge C^\sim$. Consequently, by ($\mathrm{D}$) and ($\mathrm{Cut}$), one gets $B\wedge (B^\sim \vee C^\sim)\Rightarrow_{c(T)} B\wedge C^\sim$. By similar arguments, one gets $C\wedge (B^\sim \vee C^\sim)\Rightarrow_{c(T)} C\wedge B^\sim$. Hence by ($\wedge \mathrm{L}$), ($\wedge \mathrm{R}$) and ($\mathrm{Cut}$), one gets $ (B\wedge C)\wedge (B^\sim \vee C^\sim) \Rightarrow_{c(T)} B\wedge C^\sim \wedge C\wedge B^\sim$. By inductive hypothesis, ($\wedge \mathrm{L}$), ($\wedge \mathrm{R}$) and ($\mathrm{Cut}$), one obtains $ B\wedge C^\sim \wedge C\wedge B^\sim  \Rightarrow_{c(T)} \bot$. Hence $\vdash_{\mathsf{BDFNL^*}} (B\wedge C)\wedge (B^\sim \vee C^\sim) \Rightarrow_{c(T)}\bot$.

Assume $A= (B\vee C)$. Then $(B\vee C)^\sim= B^\sim \wedge C^\sim$. By inductive hypothesis, one gets $\vdash_{\mathsf{BDFNL^*}} B\wedge B^\sim \Rightarrow_{c(T)} \bot$ and $\vdash_{\mathsf{BDFNL^*}} C\wedge C^\sim \Rightarrow_{c(T)} \bot$. Then by ($\mathrm{\wedge L}$), one gets $B\wedge B^\sim \wedge C^\sim \Rightarrow_{c(T)} \bot$ and $C\wedge C^\sim\wedge B^\sim \Rightarrow_{c(T)} \bot$. Then by ($\mathrm{\vee L}$), one obtains $(B\wedge B^\sim \wedge C^\sim)\vee (C\wedge C^\sim\wedge B^\sim)\Rightarrow_{c(T)}\bot$.  Consequently by ($\mathrm{D}$) and ($\mathrm{Cut}$), $\Psi[T] \vdash_{\mathsf{BDFNL}^*} (B\vee C) \wedge (B^\sim \wedge C^\sim)\Rightarrow_{c(T)} \bot$. By similar arguments, one gets $\Psi[T] \vdash_{\mathsf{BDFNL}^*} A\vee A^\sim\Rightarrow_{c(T)} \top$. \qed
\end{proof}

Let $T$ be a set of $\mathcal{L}_{\mathsf{BFNL^*}}$-formulae. By $exn(T)$ we denote the subset of $T$ restricted to $\mathcal{L}_{\mathsf{BDFNL^*}}$-formulae. Then the map $(.)^\sim:(c(exn(T))\hookrightarrow c(exn(T)^\sim))$ is defined as above. Now we define an embedding function $(.)^\ddagger$ from $\mathcal{L}_{\mathsf{BFNL^*}}$-formulae to $\mathcal{L}_{\mathsf{BDFNL^*}}$-formulae inductively as follows:
\begin{itemize}
\item[(1)] $p^\ddagger=p$;
\item[(2)] $(A\star B)^\ddagger=A^\ddagger\star B^\ddagger$ for $\star\in\{\cdot,\backslash,/,\wedge,\vee\}$.
\item[(3)] $ (\neg A)^\ddagger = (A^\ddagger)^\sim$.
\end{itemize}
Intuitively, we interpret the boolean negation $\neg A$ as the formula $A^\sim$ which is a propositional letter $p_A$ for $A\in T$.
For any set $T$ of  $\mathcal{L}_{\mathsf{BFNL^*}}$-formulae, let $T^\ddagger = \{A^\ddag\mid A\in T\}$.

Let $T$ be a set of formulae closed under subformulae. By $c'(T)$ we mean the closure of $T$ under $\vee,\wedge$ and $\neg$. Obviously $(c'(T))^\ddagger= c(exn(T)^\sim)$.

Let $T$ be the set of all subformulae of formulae appearing in $\Gamma \Rightarrow A$ and contains $\top$ and $\bot$. Define $\Psi[exn(T)]= \{A\wedge p_A\Rightarrow \bot \mid A\in exn(T)\}\cup\{A\vee p_A\Rightarrow \top\mid A\in exn(T)\}$.

\begin{lem}\label{Lemma:embbdfnl}
For any $\mathcal{L}_{\mathsf{BFNL^*}}$ sequent $\Gamma \Rightarrow A$, $\vdash_{\mathsf{BFNL^*}} \Gamma \Rightarrow_{c'(T)} A$ iff $\Psi[exn(T)] \vdash_{\mathsf{BDFNL^*}} \Gamma^\ddagger$ $\Rightarrow_{c(exn(T)^\sim)} A^\ddagger$.
\end{lem}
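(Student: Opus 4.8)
The plan is to prove the two directions of the biconditional separately, pushing derivations back and forth along $(.)^\ddagger$ and the substitution that undoes it. Throughout I would lean on two structural facts. First, the extended subformula property of $\mathsf{BFNL^*}$ lets me assume that every formula occurring in a derivation of $\Gamma\Rightarrow A$ lies in $c'(T)$, so that the forward translation lands in $c(exn(T)^\sim)$ by the closure identity $(c'(T))^\ddagger=c(exn(T)^\sim)$. Second, $\mathsf{BDFNL^*}$ is exactly the negation-free fragment of $\mathsf{BFNL^*}$, so every $\mathsf{BDFNL^*}$-rule and $\mathsf{BDFNL^*}$-axiom is also available in $\mathsf{BFNL^*}$; this is what makes the backward direction a matter of reinterpreting an existing tree rather than building a new one.

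For the left-to-right direction I would induct on the height of a $c'(T)$-restricted derivation of $\Gamma\Rightarrow A$ in $\mathsf{BFNL^*}$ and show that applying $(.)^\ddagger$ to every sequent yields a $c(exn(T)^\sim)$-restricted derivation of $\Gamma^\ddagger\Rightarrow A^\ddagger$ from $\Psi[exn(T)]$ in $\mathsf{BDFNL^*}$. Since $(.)^\ddagger$ is a homomorphism for all connectives except $\neg$, and $\neg$ never occurs as the principal connective of a rule (it is confined to the two negation axioms), each instance of (Id), (D), $(\bot)$, $(\top)$, the Lambek rules, the lattice rules and (Cut) translates verbatim into the corresponding $\mathsf{BDFNL^*}$-inference on the translated formulae. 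The only genuinely new work is at the leaves $(\neg 1)$ and $(\neg 2)$: the translation of $B\wedge\neg B\Rightarrow\bot$ is $B^\ddagger\wedge(B^\ddagger)^\sim\Rightarrow\bot$ and the translation of $\top\Rightarrow B\vee\neg B$ is $\top\Rightarrow B^\ddagger\vee(B^\ddagger)^\sim$, and these are precisely the two conclusions supplied by Lemma \ref{lemma:demogan} (applied with $exn(T)$ in place of $T$, reading $(.)^\sim$ as the De Morgan complement on $c(exn(T)^\sim)$ that additionally sends $p_C\mapsto C$, which is also what makes $(.)^\ddagger$ well defined on iterated negations). Thus every negation-axiom leaf is replaced by a derivation from $\Psi[exn(T)]$ and the induction closes.

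For the converse I would take a $c(exn(T)^\sim)$-restricted derivation of $\Gamma^\ddagger\Rightarrow A^\ddagger$ from $\Psi[exn(T)]$ in $\mathsf{BDFNL^*}$ and apply to every sequent the substitution $\theta$ with $\theta(p_C)=\neg C$ for $C\in exn(T)$ and $\theta(p)=p$ otherwise. Because the rule schemata of $\mathsf{BDFNL^*}$ are closed under substitution and are all $\mathsf{BFNL^*}$-rules, each internal inference is carried to a legitimate $\mathsf{BFNL^*}$-inference and each axiom-leaf to a $\mathsf{BFNL^*}$-axiom; and each assumption-leaf $C\wedge p_C\Rightarrow\bot$ is carried to $C\wedge\neg C\Rightarrow\bot$ and each $\top\Rightarrow C\vee p_C$ to $\top\Rightarrow C\vee\neg C$, which are instances of $(\neg 1)$ and $(\neg 2)$ and hence $\mathsf{BFNL^*}$-provable. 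Splicing in the one-line derivations of these leaves yields $\vdash_{\mathsf{BFNL^*}}\theta(\Gamma^\ddagger)\Rightarrow\theta(A^\ddagger)$. It then remains to prove, by a routine simultaneous induction on formulae using Facts (1), (2), (3) and the congruence property Fact (8), the two auxiliary equivalences $\vdash_{\mathsf{BFNL^*}}\theta(C^\sim)\Leftrightarrow\neg C$ for $C\in c(exn(T))$ and $\vdash_{\mathsf{BFNL^*}}\theta(B^\ddagger)\Leftrightarrow B$ for $B\in c'(T)$. Feeding these into (Cut) on the succedent, and replacing the leaf-formulae of the antecedent tree one at a time (the context version of (Cut)), converts the sequent into the desired $\Gamma\Rightarrow A$.

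The main obstacle, and the reason Lemma \ref{lemma:demogan} is proved in advance, is the forward handling of the two Boolean negation axioms: the whole construction hinges on the fresh atoms $p_C$ behaving, relative to $\Psi[exn(T)]$, exactly like Boolean complements, and it is Lemma \ref{lemma:demogan} that lifts this complement behaviour from the atomic assumptions to all of $c(exn(T)^\sim)$. A secondary but pervasive point demanding care is the bookkeeping of the two resource bounds: I must keep every sequent of the forward translation inside $c(exn(T)^\sim)$ and every sequent of the backward translation inside $c'(T)$, which is exactly why the identity $(c'(T))^\ddagger=c(exn(T)^\sim)$ and the fact that neither $(.)^\sim$ nor $(.)^\ddagger$ enlarges the stock of subformulae are invoked at each step.
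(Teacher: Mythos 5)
Your proof is correct and takes essentially the same route as the paper's: the forward direction by induction on the restricted $\mathsf{BFNL^*}$-derivation, with all rules carried over verbatim and Lemma \ref{lemma:demogan} discharging the translated negation axioms, and the converse by substituting $\neg C$ for $p_C$ throughout the $\mathsf{BDFNL^*}$-derivation from $\Psi[exn(T)]$. You are in fact somewhat more careful than the paper, which compresses your equivalence-plus-Cut repair of nested negations into the phrase ``easily rewritten,'' and which states the second family of assumptions as $C\vee p_C\Rightarrow\top$ whereas your reading $\top\Rightarrow C\vee p_C$ is the direction actually needed to simulate the axiom $(\neg 2)$.
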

\begin{proof}
We proceed by induction on the length of the $c'(T)$-deduction of $\Gamma \Rightarrow A$ in $\mathsf{BFNL^*}$. By the definition of $\ddagger$, ($\mathrm{Id}$), ($\bot$), ($\top$) and ($\mathrm{D}$) are obvious. ($\neg 1$) and ($\neg 2$) follows from Lemma \ref{lemma:demogan}. Since all rules in $\mathsf{BFNL}^*$ happen to be rules of $\mathsf{BDFNL}^*$, by inductive hypothesis the claim holds. For the converse direction, since all rules and axioms of $\mathsf{BDFNL}^*$ are rules and axioms in $\mathsf{BFNL}^*$ and all assumptions in $\Psi[exn(T)]$ are of the form $A\wedge p_A\Rightarrow \bot$ or $A\vee p_A\Rightarrow \top$, by the definition of $\ddagger$, a deduction of $\Gamma^\ddagger \Rightarrow A^\ddagger$ can be easily rewritten as a deduction of $\Gamma \Rightarrow A$ in $\mathsf{BFNL}^*$ by replacing all occurrences of $p_A$ by $\neg A$ for any formula $A$.
\end{proof}

The following lemma on subformula property is proved in \cite{bus4}.
\begin{lem}[\cite{bus4}]\label{lemma:subfor}
If $ \vdash_{\mathsf{BFNL^*}} \Gamma \Rightarrow A$, then $ \vdash_{\mathsf{BFNL^*}} \Gamma \Rightarrow_{c'(T)} A$
\end{lem}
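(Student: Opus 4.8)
The plan is to derive the lemma from an analytic (restricted) cut-elimination property: every proof in $\mathsf{BFNL^*}$ can be transformed into one in which every cut formula — and hence every formula occurring anywhere in the derivation — lies in $c'(T)$. Granting this, the lemma is immediate, since a derivation all of whose formulae lie in $c'(T)$ is by definition a $c'(T)$-deduction, so it witnesses $\vdash_{\mathsf{BFNL^*}} \Gamma \Rightarrow_{c'(T)} A$.

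First I would record the local subformula behaviour of the rules. For each rule other than $(\mathrm{Cut})$, every formula occurring in a premise is a subformula of some formula occurring in the conclusion; this is checked rule by rule for the logical rules $(\backslash \mathrm{L}/\mathrm{R})$, $(/\mathrm{L}/\mathrm{R})$, $(\cdot \mathrm{L}/\mathrm{R})$, $(\wedge \mathrm{L}/\mathrm{R})$, $(\vee \mathrm{L}/\mathrm{R})$, and it holds trivially for the axioms $(\mathrm{Id})$, $(\bot)$, $(\top)$, which have no premises. Because $c'(T)$ is closed under subformulae (as $T$ is closed under subformulae and $c'(T)$ merely adjoins $\wedge,\vee,\neg$-combinations), reading any such rule from conclusion to premises keeps all formulae inside $c'(T)$. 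Thus the only way a formula outside $c'(T)$ can enter a derivation is as the cut formula of a $(\mathrm{Cut})$ inference, or through an instance of one of the non-logical axioms $(\mathrm{D})$, $(\neg 1)$, $(\neg 2)$ whose displayed constituents are not themselves in $T$.

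Next I would dispose of the axiom instances. Every instance of $(\mathrm{D})$, $(\neg 1)$, $(\neg 2)$ is built by $\wedge,\vee,\neg$ from its constituent formulae $A,B,C$, so if those constituents lie in $c'(T)$ then the whole instance lies in $c'(T)$. It therefore suffices to guarantee that we never require an instance whose constituents fall outside $c'(T)$, and I would establish this jointly with the cut step, by an induction that simultaneously pushes cuts upward and confines every axiom instance to constituents drawn from the subformulae already present below it.

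The heart of the argument is the cut transformation, carried out by the usual double induction on the pair (complexity of the cut formula, sum of the heights of the two premise derivations). The principal/principal cases, where the cut formula is introduced by matching logical rules, reduce a cut to cuts on its immediate subformulae, which remain in $c'(T)$; the commutative cases permute the cut upward past a rule acting on a side formula. The main obstacle is the cases in which the cut formula is active in one of the non-logical axioms — cutting against the $\bot$ produced by $(\neg 1)$, against the disjunction produced by $(\neg 2)$, or against either side of $(\mathrm{D})$. Here the standard reduction steps do not apply verbatim, since $(\mathrm{D})$ raises formula complexity in one direction and $\neg$ has no introduction rules, so a naive complexity measure need not decrease. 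I would handle this by showing that such a cut can always be replaced by cuts whose cut formulae are $\wedge,\vee,\neg$-combinations of formulae already occurring below the axiom, so that every new cut formula stays within $c'(T)$ while a suitably chosen well-founded measure (for instance, lexicographic on the multiset of cut-formula complexities together with derivation height) strictly decreases. Once this is verified the induction terminates and yields a derivation all of whose formulae lie in $c'(T)$, which is exactly the required $c'(T)$-deduction; an entirely analogous argument can alternatively be packaged semantically through the residuated (Gentzen) frame construction for $\mathsf{BFNL^*}$, building the finite algebra directly from $c'(T)$ and reading off cut-freeness from the frame conditions corresponding to $(\mathrm{D})$, $(\neg 1)$, $(\neg 2)$, which is the route underlying the finite embeddability argument of \cite{bus4}.
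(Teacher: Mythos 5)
The paper does not prove this lemma at all: it is imported verbatim from \cite{bus4}, where it is established via an interpolation lemma (for every substructure $\Delta$ of a provable $\Gamma[\Delta]\Rightarrow A$ there is an interpolant $D$ in the $\wedge,\vee,\neg$-closure of the subformulae with $\Delta\Rightarrow D$ and $\Gamma[D]\Rightarrow A$ provable) combined with the finite-algebra/Gentzen-frame construction underlying the finite embeddability property. So the relevant question is whether your self-contained argument works, and there is a genuine gap at exactly the point you flag and then wave through. Your reduction of the problem to (a) cut formulae and (b) constituents of the axiom instances $(\mathrm{D})$, $(\neg 1)$, $(\neg 2)$ is fine, since every other rule is locally analytic and $c'(T)$ is closed under subformulae. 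But the step ``such a cut can always be replaced by cuts whose cut formulae are $\wedge,\vee,\neg$-combinations of formulae already occurring below the axiom, while a suitably chosen well-founded measure strictly decreases'' is not a proof step; it \emph{is} the lemma. No mechanism is given for producing the replacement formulae, and producing them is precisely what the interpolation lemma of \cite{bus4} does. Your proposal never formulates or proves any interpolation statement.

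Moreover, the proposed termination measure is doubtful on its face. Cut is genuinely non-eliminable in $\mathsf{BFNL^*}$ (distributivity and the Boolean axioms force residual cuts), so the induction cannot bottom out in a cut-free derivation; and the replacement formulae drawn from $c'(T)$ can be of arbitrarily high complexity, so a lexicographic measure built from cut-formula complexities and derivation heights has no reason to decrease when a ``bad'' cut is traded for several $c'(T)$-cuts. You acknowledge that a naive complexity measure fails and then propose essentially a naive measure. The closing remark that the argument ``can alternatively be packaged semantically'' through the Gentzen-frame/FEP construction is closer to the truth than the rest of the proposal, but it is misleading to call the two routes analogous: the algebraic route is the actual known proof, it is not a repackaging of a cut-permutation argument, and your proposal does not develop it (in particular, it does not build the finite algebra from $c'(T)$ modulo provable equivalence, nor verify the frame conditions). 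As it stands, the proposal identifies the obstacle correctly but does not overcome it.
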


By Lemma \ref{lemma:subfor} and \ref{Lemma:embbdfnl}, one obtains the following theorem immediately.
\begin{thm}\label{theorem:embbdfnl}
 $ \vdash_{\mathsf{BFNL^*}} \Gamma \Rightarrow A$ iff $\Psi[exn(T)] \vdash_{\mathsf{BDFNL^*}} \Gamma^\ddagger \Rightarrow A^\ddagger$
\end{thm}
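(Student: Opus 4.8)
The plan is to prove the two directions of the biconditional separately, chaining the subformula property (Lemma \ref{lemma:subfor}) with the embedding equivalence (Lemma \ref{Lemma:embbdfnl}) in the forward direction, and exploiting the robustness of the substitution argument behind Lemma \ref{Lemma:embbdfnl} in the backward direction. Throughout, $T$ is fixed as the set of all subformulae occurring in $\Gamma \Rightarrow A$ together with $\top$ and $\bot$, so that the embedding $(.)^\ddagger$, the translation $(.)^\sim$, and the assumption set $\Psi[exn(T)]$ are all well-defined and tailored to the sequent at hand.

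For the forward direction, suppose $\vdash_{\mathsf{BFNL^*}} \Gamma \Rightarrow A$. First I would invoke Lemma \ref{lemma:subfor} to obtain a derivation confined to $c'(T)$, that is, $\vdash_{\mathsf{BFNL^*}} \Gamma \Rightarrow_{c'(T)} A$. This step is what makes the embedding applicable at all: the translation $(.)^\sim$ and the assumptions $\Psi[exn(T)]$ only account for the negations of formulae lying in $exn(T)$, so a derivation that strays outside $c'(T)$ could not be mirrored in $\mathsf{BDFNL^*}$. With the derivation now bounded, the left-to-right direction of Lemma \ref{Lemma:embbdfnl} yields $\Psi[exn(T)] \vdash_{\mathsf{BDFNL^*}} \Gamma^\ddagger \Rightarrow_{c(exn(T)^\sim)} A^\ddagger$, and dropping the $c(exn(T)^\sim)$-restriction gives the desired unrestricted conclusion.

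For the backward direction, suppose $\Psi[exn(T)] \vdash_{\mathsf{BDFNL^*}} \Gamma^\ddagger \Rightarrow A^\ddagger$. The main obstacle here is an asymmetry: I cannot simply quote the right-to-left half of Lemma \ref{Lemma:embbdfnl} as stated, because that half takes a \emph{restricted} premise $\Rightarrow_{c(exn(T)^\sim)}$, and the excerpt establishes no subformula property for $\mathsf{BDFNL^*}$ enriched with assumptions that would let me bound the given derivation in advance. Instead I would re-use the argument actually carried out in the converse part of the proof of Lemma \ref{Lemma:embbdfnl}: since every rule and axiom of $\mathsf{BDFNL^*}$ is a rule or axiom of $\mathsf{BFNL^*}$, and each assumption $A \wedge p_A \Rightarrow \bot$ and $A \vee p_A \Rightarrow \top$ becomes, under the substitution $p_A \mapsto \neg A$, respectively an instance of $(\neg 1)$ and a sequent derivable from $(\top)$, applying this substitution uniformly throughout the given derivation rewrites it into a genuine $\mathsf{BFNL^*}$-derivation of $\Gamma \Rightarrow A$. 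The crucial point—and what closes the gap without any additional proof-theoretic machinery—is that this rewriting is insensitive to whether the formulae of the derivation lie in $c(exn(T)^\sim)$, so it transports the unrestricted hypothesis back to $\mathsf{BFNL^*}$ directly.
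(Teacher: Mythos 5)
Your proposal is correct and follows essentially the paper's route: the paper derives the theorem in one line by chaining Lemma \ref{lemma:subfor} with Lemma \ref{Lemma:embbdfnl}, which is exactly your forward direction. The one place you depart from a literal citation is the backward direction, and your caution there is warranted: the right-to-left half of Lemma \ref{Lemma:embbdfnl} takes a premise restricted to $c(exn(T)^\sim)$-sequents, and at this point the paper has established no subformula property for $\mathsf{BDFNL^*}$ enriched with assumptions that would let one bound an arbitrary derivation in advance (such a lemma, from \cite{bus4}, is only recalled later, in the $\mathsf{DFNL^*}$ reduction). Your repair --- applying the substitution $p_A \mapsto \neg A$ uniformly to the unrestricted derivation, so that every $\mathsf{BDFNL^*}$ rule and axiom becomes a rule or axiom of $\mathsf{BFNL^*}$ and each assumption of $\Psi[exn(T)]$ becomes an instance of $(\neg 1)$ or of $(\top)$ --- is precisely the argument carried out inside the converse part of the proof of Lemma \ref{Lemma:embbdfnl}, and, as you observe, that argument never uses the restriction, so it transfers verbatim to the unrestricted hypothesis. (Equivalently, you could first invoke the \cite{bus4} subformula property for $\Phi \vdash_{\mathsf{BDFNL^*}}$ to confine the given derivation and then quote Lemma \ref{Lemma:embbdfnl} as stated.) So your proof takes the same decomposition as the paper's, but makes explicit a step that the paper's ``immediately'' glosses over; any residual imprecision in the substitution step (e.g., its behaviour on nested negations) is inherited from the paper's own proof of Lemma \ref{Lemma:embbdfnl}, not introduced by you.
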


Obviously the construction of $\Psi[exn(T)]$ and the reduction are both in polynomial time, together with Theorem \ref{theorem:embbdfnl} and \ref{theorem:comBFNL}, one gets the following theorem.
\begin{thm}
The decision problem in $\mathsf{BDFNL}^*(\mathrm{\Phi})$ is PSPACE-hard.
\end{thm}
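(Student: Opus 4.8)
The plan is to read Theorem~\ref{theorem:embbdfnl} as a \emph{polynomial-time many-one reduction} of the decision problem for $\mathsf{BFNL^*}$ to that for $\mathsf{BDFNL^*}(\mathrm{\Phi})$, and then to argue that PSPACE-hardness transfers along it. Recall that an instance of the $\mathsf{BDFNL^*}(\mathrm{\Phi})$ decision problem consists of a finite assumption set together with a sequent, both supplied as input. Given an instance $\Gamma \Rightarrow A$ of the $\mathsf{BFNL^*}$ problem, the reduction computes $T$ (the set of all subformulae of $\Gamma \Rightarrow A$ together with $\top$ and $\bot$), then $exn(T)$, and outputs the pair consisting of the assumption set $\Psi[exn(T)]$ and the translated sequent $\Gamma^\ddagger \Rightarrow A^\ddagger$. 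By Theorem~\ref{theorem:embbdfnl} this output is a yes-instance of $\mathsf{BDFNL^*}(\mathrm{\Phi})$ precisely when $\Gamma \Rightarrow A$ is a yes-instance of $\mathsf{BFNL^*}$, so correctness of the reduction is immediate and requires no further work; the entire argument reduces to complexity bookkeeping.

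That bookkeeping I would carry out in three bounds. First, the number of subformulae of $\Gamma \Rightarrow A$ is linear in the length of the input, so $T$ and its subset $exn(T)$ are computable in polynomial time and are of linear size. Second, $\Psi[exn(T)]$ adds, for each $A \in exn(T)$, exactly the two simple sequents $A \wedge p_A \Rightarrow \bot$ and $A \vee p_A \Rightarrow \top$; hence it contains $2\,|exn(T)|$ assumptions, each of length linear in the input, and is produced in polynomial time. Third, and last, I would bound the size of the translated sequent $\Gamma^\ddagger \Rightarrow A^\ddagger$.

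The step I expect to require the most care is this last one, controlling the blow-up of $(.)^\ddagger$. The only nontrivial clause is $(\neg A)^\ddagger = (A^\ddagger)^\sim$, and the concern is that nested negations might compound through the auxiliary map $(.)^\sim$. The key observation I would isolate is that $(.)^\sim$ is \emph{size-non-increasing}: as a syntactic operation it merely propagates through the outermost $\wedge,\vee$-skeleton by De Morgan, swaps $\top$ with $\bot$, and collapses every maximal subformula not built by $\wedge$ or $\vee$ to a single letter $p_B$. Thus $|(A^\ddagger)^\sim| \le |A^\ddagger|$, and a routine induction on $A$ then gives $|A^\ddagger| \le |A|$ as node counts (the homomorphic clauses preserve size up to the extra connective, and the $\neg$-clause can only shrink). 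The finitely many new atoms $p_B$ are each nameable in polynomial space since $B$ ranges over the subformulae in $T$. Combining the three bounds shows the reduction runs in polynomial time; since Theorem~\ref{theorem:comBFNL} establishes that the $\mathsf{BFNL^*}$ decision problem is PSPACE-hard and PSPACE-hardness is preserved under polynomial-time many-one reductions, the decision problem in $\mathsf{BDFNL^*}(\mathrm{\Phi})$ is PSPACE-hard.
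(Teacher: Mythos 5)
Your proposal is correct and follows essentially the same route as the paper: the paper likewise combines Theorem~\ref{theorem:embbdfnl} (correctness of the $\ddagger$-reduction with assumption set $\Psi[exn(T)]$) with Theorem~\ref{theorem:comBFNL} (PSPACE-hardness of $\mathsf{BFNL^*}$), dismissing the polynomial-time bound as obvious. Your only addition is to spell out that bound explicitly (size of $T$, of $\Psi[exn(T)]$, and non-increase of $(.)^\sim$ under node counting), which the paper leaves implicit.
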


Now let us embed $\mathsf{BDFNL^*}$ into $\mathsf{DFNL^*}$. First we define a set of special simple sequents which will be used to replace the role of $\top$ and $\bot$ in $\mathsf{BDFNL^*}$.
Let $p_\bot$ and $p_\top$ be two distinguished propositional letters. Let $T$ be a set of $\mc{L}_\mathsf{DFNL^*}$-formulae containing $p_\bot$ and $p_\top$ and closed under subformulae. By $\Theta[T]$ we mean a set of sequents containing all sequents of the following form:
\begin{displaymath}
p_\bot\Rightarrow A\quad A\circ p_\bot \Rightarrow p_\bot\quad p_\bot\circ A\Rightarrow p_\bot
\end{displaymath}
\begin{displaymath}
A\Rightarrow p_\top\quad A\circ p_\top \Rightarrow p_\top\quad p_\top\circ A\Rightarrow p_\top
\end{displaymath}
where $A\in T$. Then we may prove the following lemma.

\begin{lem}
Let $T$ be a set of $\mc{L}_\mathsf{DFNL^*}$-formulae containing $p_\bot$ and $p_\top$ and closed under subformulae. Then for all $A\in c(T)$, the sequents $p_\bot\Rightarrow A, A\circ p_\bot \Rightarrow p_\bot, p_\bot\circ A\Rightarrow p_\bot, A\Rightarrow p_\top,  A\circ p_\top \Rightarrow p_\top,  p_\top\circ A\Rightarrow p_\top$ are derivable from $\Theta[T]$ in $\mathsf{DFNL^*}$.
\end{lem}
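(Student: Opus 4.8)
The plan is to argue by induction on the way $A$ is generated inside $c(T)$, that is, on the number of occurrences of $\wedge$ and $\vee$ used to build $A$ from formulae of $T$. Since $c(T)$ is the closure of $T$ under $\wedge$ and $\vee$, every $A\in c(T)$ is either a member of $T$ or has the form $B\wedge C$ or $B\vee C$ with $B,C\in c(T)$; hence the induction is well founded and the inductive hypothesis applies to $B$ and $C$. In the base case $A\in T$ all six sequents are literally instances of $\Theta[T]$, so they are derivable in one step. I would prove all six sequents simultaneously, noting that they are in fact mutually independent: each of the six sequents for $A$ reduces only to the corresponding sequents for the immediate subformulae $B$ and $C$.

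For the inductive step I would split according to the main connective of $A$ and treat the two ``plain'' sequents $p_\bot\Rightarrow A$ and $A\Rightarrow p_\top$ separately from the four ``context'' sequents. When $A=B\wedge C$, the sequent $p_\bot\Rightarrow B\wedge C$ follows from the hypotheses $p_\bot\Rightarrow B$ and $p_\bot\Rightarrow C$ by $(\wedge R)$, while $B\wedge C\Rightarrow p_\top$ follows from $B\Rightarrow p_\top$ by $(\wedge L)$. When $A=B\vee C$, the sequent $p_\bot\Rightarrow B\vee C$ follows from $p_\bot\Rightarrow B$ by $(\vee R)$, and $B\vee C\Rightarrow p_\top$ follows from $B\Rightarrow p_\top$ and $C\Rightarrow p_\top$ by $(\vee L)$.

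The four context sequents are the point where I would use that the left rules act inside an arbitrary context $\Gamma[-]$. Consider, for instance, $A\circ p_\bot\Rightarrow p_\bot$ with $\Gamma[-]=[-]\circ p_\bot$. If $A=B\wedge C$, then applying $(\wedge L)$ to the hypothesis $\Gamma[B]=B\circ p_\bot\Rightarrow p_\bot$ yields $\Gamma[B\wedge C]=(B\wedge C)\circ p_\bot\Rightarrow p_\bot$; if $A=B\vee C$, then applying $(\vee L)$ to the two hypotheses $B\circ p_\bot\Rightarrow p_\bot$ and $C\circ p_\bot\Rightarrow p_\bot$ yields $(B\vee C)\circ p_\bot\Rightarrow p_\bot$. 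The remaining three context sequents $p_\bot\circ A\Rightarrow p_\bot$, $A\circ p_\top\Rightarrow p_\top$ and $p_\top\circ A\Rightarrow p_\top$ are handled identically, only changing the one-hole context and the distinguished letter. Because the succedent is in each case a single propositional letter ($p_\bot$ or $p_\top$), no right rule and no (Cut) is ever needed.

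There is no genuine obstacle here: the argument is a routine simultaneous induction and the only real work is bookkeeping across the twelve sub-cases (two connectives against six sequents). The one thing I would be careful about is keeping the orientation of the additive rules straight -- for $\wedge$ the right rule needs both inductive premises while the left rule needs only one, and for $\vee$ it is the other way round -- and checking that in every context sequent the distinguished letter stays fixed, so that $(\wedge L)$ and $(\vee L)$ apply with $\Gamma[-]$ being precisely the one-hole context surrounding $A$.
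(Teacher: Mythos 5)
Your proof is correct and takes essentially the same route as the paper's: induction on the complexity of $A$ with the base case given by the instances in $\Theta[T]$, using $(\wedge R)/(\vee R)$ for the sequents $p_\bot\Rightarrow A$, and $(\wedge L)/(\vee L)$ applied in the appropriate one-hole context for the remaining sequents, with no need for (Cut). The paper writes out only the first two families ($p_\bot\Rightarrow A$ and $A\circ p_\bot\Rightarrow p_\bot$) and dismisses the rest as ``similar arguments''; your case analysis just makes that bookkeeping explicit.
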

\begin{proof}
By induction on the complexity of $A$. The case of $A\in T$ is obvious. Here we only show the proof of sequents of the first two form, others can be proved by similar arguments. Consider the sequent of the form $p_\bot \Rightarrow A$. Assume $A=A_1\wedge A_2$. By inductive hypothesis, one obtains $p_\bot \Rightarrow A_1$ and $p_\bot \Rightarrow A_2$. By ($\mathsf{\wedge R}$), one gets $p_\bot \Rightarrow A_1\wedge A_2$. Assume $A=A_1\vee A_2$. By inductive hypothesis, one obtains $p_\bot \Rightarrow A_1$, whence by ($\mathrm{\vee R}$), one gets $p_\bot\Rightarrow A_1\vee A_2$. Then let us consider the sequent of the form $A\circ p_\bot\Rightarrow p_\bot$. Assume that $A=A_1\wedge A_2$. By inductive hypothesis, one gets $A_1\circ p_\bot \Rightarrow p_\bot$. Hence by ($\mathrm{\wedge L}$), one obtains $A_1\wedge A_2\circ p_\bot \Rightarrow p_\bot$. By similar arguments, if $A=A_1\vee A_2$, then one obtains $A_1\vee A_2\circ p_\bot \Rightarrow p_\bot$. \qed
\end{proof}

\begin{lem}\label{theorem:misbottop}
Let $T$ be a set of $\mc{L}_\mathsf{DFNL^*}$-formulae containing $p_\bot$ and $p_\top$ and closed under subformulae. Then the $c(T)$-sequents $\Gamma[\bot]\Rightarrow A$ and $\Gamma \Rightarrow \top$ are derivable from $\Theta[T]$ in $\mathsf{DFNL^*}$.
\end{lem}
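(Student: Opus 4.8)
I read the two displayed schemes as $\Gamma[p_\bot]\Rightarrow A$ and $\Gamma\Rightarrow p_\top$, i.e.\ as the assertion that the letters $p_\bot$ and $p_\top$ simulate the axioms $(\bot)$ and $(\top)$ over $c(T)$-sequents (there being no genuine $\bot,\top$ in $\mathsf{DFNL^*}$). The plan is to reduce both schemes to the single-formula facts supplied by the previous lemma together with the atomic assumptions in $\Theta[T]$, absorbing the tree structure of $\Gamma$ in four steps: collapse trees to $p_\top$, absorb trees into $p_\bot$, run a context induction, and finish with one cut.

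First I would prove the \emph{collapse property}: for every formula tree $\Delta$ whose leaves all lie in $c(T)$, $\Theta[T]\vdash_{\mathsf{DFNL^*}}\Delta\Rightarrow p_\top$, by induction on $\Delta$. The base case $\Delta=B\in c(T)$ is the previous lemma. For $\Delta=\Delta_1\circ\Delta_2$ the induction hypothesis gives $\Delta_1\Rightarrow p_\top$ and $\Delta_2\Rightarrow p_\top$, so $(\cdot\mathrm{R})$ yields $\Delta_1\circ\Delta_2\Rightarrow p_\top\cdot p_\top$; since $p_\top\in T$ the sequent $p_\top\circ p_\top\Rightarrow p_\top$ lies in $\Theta[T]$, whence $(\cdot\mathrm{L})$ gives $p_\top\cdot p_\top\Rightarrow p_\top$ and $(\mathrm{Cut})$ closes the step. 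This already settles the scheme $\Gamma\Rightarrow p_\top$. Next I would prove the \emph{absorption property} $p_\bot\circ\Delta\Rightarrow p_\bot$ and $\Delta\circ p_\bot\Rightarrow p_\bot$ for every such tree $\Delta$: since $\Delta\Rightarrow p_\top$ is now available and (as $p_\top\in T$) the sequents $p_\bot\circ p_\top\Rightarrow p_\bot$ and $p_\top\circ p_\bot\Rightarrow p_\bot$ lie in $\Theta[T]$, a single $(\mathrm{Cut})$ in the context $p_\bot\circ[-]$ (resp.\ $[-]\circ p_\bot$) gives the two sequents. Finally I would prove $\Gamma[p_\bot]\Rightarrow p_\bot$ by induction on the context $\Gamma[-]$: the base $\Gamma[-]=[-]$ is $(\mathrm{Id})$, and for $\Gamma[-]=\Sigma[-]\circ\Delta$ (the case $\Delta\circ\Sigma[-]$ being symmetric) the induction hypothesis gives $\Sigma[p_\bot]\Rightarrow p_\bot$, which combines with the absorption sequent $p_\bot\circ\Delta\Rightarrow p_\bot$ via $(\mathrm{Cut})$ in the context $[-]\circ\Delta$. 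Cutting $\Gamma[p_\bot]\Rightarrow p_\bot$ against $p_\bot\Rightarrow A$ (the previous lemma, for $A\in c(T)$) then yields $\Gamma[p_\bot]\Rightarrow A$.

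The main obstacle is the nonassociativity of $\mathsf{DFNL^*}$: the assumptions in $\Theta[T]$ only let a \emph{single} $c(T)$-formula be absorbed immediately beside $p_\bot$, whereas in $\Gamma[p_\bot]$ the sibling of the hole is in general a whole formula tree $\Delta$ that no associativity law is available to flatten, and its associated formula $\varphi(\Delta)$ need not lie in $c(T)$. The device that removes this obstacle is to first collapse the sibling tree to the single letter $p_\top$ and only then absorb it through the atomic assumption $p_\bot\circ p_\top\Rightarrow p_\bot\in\Theta[T]$; this is precisely why the $p_\top$-scheme must be established before the $p_\bot$-scheme. Once this ordering is fixed I expect the inductions to be routine.
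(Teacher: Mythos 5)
Your proof is correct, and your reading of the displayed sequents as $\Gamma[p_\bot]\Rightarrow A$ and $\Gamma\Rightarrow p_\top$ (the letters simulating the deleted constants) is the intended one. Your route, however, is organized differently from the paper's, and the difference is substantive. The paper proves the $p_\bot$-scheme by a single induction on the total number $n$ of $\circ$ in the antecedent: writing $\Gamma[p_\bot]=\Gamma'[\Delta\circ p_\bot]$, it appeals to the inductive hypothesis for \emph{both} $\Delta\circ p_\bot\Rightarrow p_\bot$ and $\Gamma'[p_\bot]\Rightarrow A$, and then cuts; the $p_\top$-scheme is declared ``similar'' and is never invoked in that argument. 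Your proof instead proceeds in layers: collapse ($\Delta\Rightarrow p_\top$), absorption ($\Delta\circ p_\bot\Rightarrow p_\bot$ and $p_\bot\circ\Delta\Rightarrow p_\bot$, obtained by cutting the collapse sequent through the atomic assumptions $p_\top\circ p_\bot\Rightarrow p_\bot$ and $p_\bot\circ p_\top\Rightarrow p_\bot$), a context induction, and one final cut against $p_\bot\Rightarrow A$. What your ordering buys is precisely the case the paper's induction glosses over: when $\Gamma'$ is the empty context and $\Delta$ is a genuine tree, the sequent $\Delta\circ p_\bot\Rightarrow p_\bot$ has the same number of $\circ$ as the goal, so the paper's appeal to the inductive hypothesis is not licensed there; the repair is exactly your device of first collapsing $\Delta$ to $p_\top$ and only then absorbing it. So your proof is not merely correct but makes explicit (and justifies) a step the paper leaves unjustified as written.

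One small blemish: your collapse step passes through the formula $p_\top\cdot p_\top$ via $(\cdot\mathrm{R})$ and $(\cdot\mathrm{L})$, and $p_\top\cdot p_\top\notin c(T)$ since $c(T)$ is closed only under $\wedge$ and $\vee$. The lemma as literally stated does not forbid this, but the following lemma in the paper needs the resulting $\mathsf{DFNL^*}$-derivations to consist of $c(ec(T))$-sequents only, so it is better to stay inside $c(T)$: instead of forming the product, cut $\Delta_1\Rightarrow p_\top$ into the assumption $p_\top\circ p_\top\Rightarrow p_\top$ (in the context $[-]\circ p_\top$) to get $\Delta_1\circ p_\top\Rightarrow p_\top$, and then cut $\Delta_2\Rightarrow p_\top$ into that. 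With this two-cut variant every sequent occurring in your derivations is a $c(T)$-sequent, and the proof slots directly into the paper's subsequent reduction.
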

\begin{proof}
We prove the first sequent by induction on the total number $n$ of $\circ$ in the sequent. The second one can be show similarly. The basic case $n\leq1$ is easy. Assume $\Gamma[\bot]=\Gamma'[\Delta\circ \bot]$. By inductive hypothesis, one obtains $ \Delta\circ \bot \Rightarrow \bot$ and $ \Gamma'[\bot] \Rightarrow A$ are both derivable from $\Theta$ in $\mathsf{DFNL^*}$. Hence by (Cut), one gets $\Gamma[\bot]\Rightarrow A$. \qed
\end{proof}

We define an embedding function $(.)^\S$ from $\mathcal{L}_{\mathsf{BDFNL^*}}$-formulae to $\mathcal{L}_{\mathsf{DFNL^*}}$-formulae inductively as follows:
\begin{itemize}
\item[(1)] $\bot^\S=p_\bot$  and $\top^\S=p_\top$.
\item[(2)] $(A\star B)^\S=A^\S\star B^\S$ for $\star\in\{\cdot,\backslash,/,\wedge,\vee\}$.
\end{itemize}

Let $\Gamma \Rightarrow A$ be a $\mc{L}_{\mathsf{BDFNL^*}}$-sequent and $\Phi$ a finite set of $\mc{L}_{\mathsf{BDFNL^*}}$-sequents. Let $T$ be the set of all subformulae occured in $\Gamma\Imp A$ or $\Phi$, and containing $\top$ and $\bot$. First we recall the following lemma from \cite{bus4}.
\begin{lem}
If $\Phi \vdash_{\mathsf{BDFNL^*}} \Gamma \Rightarrow A$, then $\Phi \vdash_{\mathsf{BDFNL^*}} \Gamma \Rightarrow_{c(T)} A$.
\end{lem}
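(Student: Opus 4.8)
The plan is to regard the members of $\Phi$ as additional initial sequents and to show that every derivation of $\Gamma \Rightarrow A$ from $\Phi$ in $\mathsf{BDFNL^*}$ can be converted into one all of whose sequents are $c(T)$-sequents. Everything hinges on the structural fact already recorded: since $T$ is closed under subformulae, so is $c(T)$. Consequently every inference rule of $\mathsf{BDFNL^*}$ other than $(\mathrm{Cut})$, as well as every axiom, has the ordinary subformula property, each formula in a premise being a subformula of a formula in the conclusion; and the axiom $(\mathrm{D})$, though it introduces the formulae $A \wedge B$ and $A \wedge C$, leaves them inside $c(T)$ precisely because $c(T)$ is closed under $\wedge$ and $\vee$. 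It follows, by a downward induction along the derivation, that as soon as the endsequent is a $c(T)$-sequent and every cut formula lies in $c(T)$, every sequent of the derivation is a $c(T)$-sequent: the endsequent $\Gamma \Rightarrow A$ is a $c(T)$-sequent because its formulae lie in $T$, and for each rule the premises are forced into $c(T)$ by the conclusion (using the cut formula in the case of $(\mathrm{Cut})$). The whole problem therefore reduces to confining the cut formulae to $c(T)$.

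First I would dispose of the cuts made against the assumptions. Each sequent of $\Phi$ has the form $B \Rightarrow C$ with $B$ and $C$ subformulae occurring in $\Phi$, so $B, C \in T \subseteq c(T)$; any cut whose minor premise is such an assumption already has its cut formula in $c(T)$, and such cuts are simply retained. The remaining cuts are the purely logical ones, between two sequents each obtained by a logical rule, an $(\mathrm{Id})$-axiom, or a $c(T)$-instance of $(\mathrm{D})$. For these I would run the usual cut-reduction procedure for $\mathsf{BDFNL^*}$: when the cut formula is principal in both premises it is replaced by cuts on its immediate subformulae; when it is not principal in one premise the cut is permuted upward past the last rule applied on that side, the context $\Gamma[\cdot]$ of $(\mathrm{Cut})$ being carried along unchanged; a cut against an $(\mathrm{Id})$-axiom is removed outright. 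Since $c(T)$ is closed under subformulae, every cut formula created by a reduction step again lies in $c(T)$, so the property ``all cut formulae are in $c(T)$'' is an invariant of the procedure.

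The delicate point, and the main obstacle, is to make this reduction terminate in the presence of the distribution axiom $(\mathrm{D})$ and the additive rules. Because $(\mathrm{D})$ is an initial sequent with no single principal connective, a cut against a $(\mathrm{D})$-instance cannot be driven up in the naive way; one must verify case by case that the auxiliary formulae $A \wedge B$ and $A \wedge C$ generated where distribution meets a cut remain in $c(T)$ (this is exactly the point at which closure of $T$ under $\wedge$ and $\vee$, rather than mere subformula closure, is indispensable), and one must organise the reduction by a well-founded measure combining the degree of the cut formula with the height of the derivation above the cut. This combinatorial analysis is precisely the cut-elimination argument carried out for the additive, distributive, nonassociative calculus in \cite{bus4}; the only new ingredient here is the set $\Phi$ of assumptions, which, as observed, affects only the base case of that analysis, namely the cuts against members of $\Phi$, and those keep the cut formula inside $T$. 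Putting the pieces together, every derivation of the $c(T)$-sequent $\Gamma \Rightarrow A$ from $\Phi$ is converted into a derivation in which every sequent is a $c(T)$-sequent, which is exactly the assertion $\Phi \vdash_{\mathsf{BDFNL^*}} \Gamma \Rightarrow_{c(T)} A$.
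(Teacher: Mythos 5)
There is a genuine gap, and it sits exactly where you placed the whole weight of the argument. Your first reduction is sound: since every rule of $\mathsf{BDFNL^*}$ other than $(\mathrm{Cut})$ only moves subformulae (or $\wedge/\vee$-combinations of them) upward, and cuts against members of $\Phi$ have their cut formula in $T$, the lemma does reduce to forcing all remaining cut formulae into $c(T)$. But the procedure you invoke to achieve this --- ``the usual cut-reduction procedure,'' organised by a well-founded measure combining cut degree and derivation height --- does not exist for $\mathsf{BDFNL^*}$. The obstruction is the axiom $(\mathrm{D})$: a cut whose left premise is an instance $E\wedge(F\vee G)\Rightarrow(E\wedge F)\vee(E\wedge G)$ with $E,F,G$ outside $c(T)$ admits no reduction step at all. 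There is no last rule on that side to permute the cut past, and the principal--principal reduction would require the cut formula to be introduced by $(\vee\mathrm{R})$ or $(\wedge\mathrm{R})$, not by an axiom. Such cuts are, moreover, genuinely unavoidable: cut elimination fails for distributive calculi of this kind (for instance, $p\wedge(q\vee(r\vee s))\Rightarrow((p\wedge q)\vee(p\wedge r))\vee(p\wedge s)$ has no cut-free derivation even with $(\mathrm{D})$-instances allowed as leaves), so no terminating reduction can dispose of them, and nothing in your procedure forces the formulas occurring in such stuck $(\mathrm{D})$-cuts into $c(T)$.

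Your appeal to \cite{bus4} does not close this gap, because \cite{bus4} contains no cut-elimination argument for this calculus --- precisely because none is possible. What \cite{bus4} actually proves is an interpolation lemma: by induction on the given derivation one produces, for the cut case, an interpolant $D\in c(T)$ such that $\Phi\vdash\Delta\Rightarrow D$ and $\Phi\vdash\Gamma[D]\Rightarrow A$, with both derivations confined to $c(T)$; the offending cut formula is replaced by an interpolant rather than reduced away. That replacement mechanism is the missing idea in your proof, and it is genuinely different from cut reduction. (Note also that the paper itself offers no proof of this lemma; it explicitly recalls it from \cite{bus4}, so the interpolation argument there is the intended justification.)
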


By $ec(T)$ we mean the set obtained from $T$ by replacing all occurrences of $\bot, \top$ in formulae by $p_\top, p_\bot$. Notice that $(c(T))^\S=c(ec(T))$. Let $\Theta[ec(T)]$ be the set of all sequents $p_\bot\Rightarrow A, A\circ p_\bot \Rightarrow p_\bot, p_\bot\circ A\Rightarrow p_\bot, A\Rightarrow p_\top,  A\circ p_\top \Rightarrow p_\top,  p_\top\circ A\Rightarrow p_\top$ for $A\in ex(T)$.
Since all rules of $\mathsf{BDFNL^*}$ are rules of $\mathsf{DFNL^*}$, together with  Lemma \ref{theorem:misbottop} one can easily obtain the following lemma.
\begin{lem}
$\Phi \vdash_{\mathsf{BDFNL^*}} \Gamma \Rightarrow_{c(T)}  A$ iff $\Phi\cup\Theta[ec(T)] \vdash_{\mathsf{DFNL^*}} \Gamma^\S \Rightarrow_{c(ec(T))} A^\S$
\end{lem}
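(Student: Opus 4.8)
The plan is to prove the two directions separately, in close analogy to Lemma \ref{Lemma:embbdfnl}, the difference being that here the axioms $(\bot)$ and $(\top)$ of $\mathsf{BDFNL^*}$ are simulated by the extra assumptions in $\Theta[ec(T)]$ rather than by De~Morgan reasoning. The key observation is that $(.)^\S$ restricts to a bijection between $c(T)$ and $c(ec(T))$ --- this is exactly the identity $(c(T))^\S = c(ec(T))$ recorded above --- so that the inverse substitution replacing every $p_\bot$ by $\bot$ and every $p_\top$ by $\top$ is unambiguous on $c(ec(T))$, provided $p_\bot$ and $p_\top$ are taken to be fresh distinguished letters that occur in $T$ only as the images of $\bot$ and $\top$.

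For the left-to-right direction I would induct on the length of the $c(T)$-deduction of $\Gamma \Rightarrow A$ from $\Phi$ in $\mathsf{BDFNL^*}$. Since $(.)^\S$ commutes with $\cdot, \backslash, /, \wedge$ and $\vee$, every instance of $(\mathrm{Id})$ or $(\mathrm{D})$ is sent to an instance of the same axiom of $\mathsf{DFNL^*}$, and every application of a Lambek rule, a lattice rule or $(\mathrm{Cut})$ is sent to an application of the identical rule, to which the induction hypothesis applies; an assumption of $\Phi$ is carried to its $\S$-image, which is present among the assumptions on the right. The only genuinely new cases are $(\bot)$ and $(\top)$, whose $\S$-images are $\Gamma^\S[p_\bot] \Rightarrow A^\S$ and $\Gamma^\S \Rightarrow p_\top$; these are precisely the sequents shown derivable from $\Theta[ec(T)]$ in $\mathsf{DFNL^*}$ by Lemma \ref{theorem:misbottop} (applied with $ec(T)$ in place of $T$). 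As $(.)^\S$ maps $c(T)$ onto $c(ec(T))$, every sequent produced along the way is a $c(ec(T))$-sequent.

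For the right-to-left direction I would take a $c(ec(T))$-deduction of $\Gamma^\S \Rightarrow A^\S$ from $\Phi \cup \Theta[ec(T)]$ in $\mathsf{DFNL^*}$ and apply the inverse substitution $p_\bot \mapsto \bot$, $p_\top \mapsto \top$ uniformly to the whole deduction. Because every rule of $\mathsf{DFNL^*}$ is also a rule of $\mathsf{BDFNL^*}$, each rule instance is carried to a legitimate $\mathsf{BDFNL^*}$ instance and the end-sequent returns to $\Gamma \Rightarrow A$. The leaves coming from $\Phi$ return to their originals, while each leaf coming from $\Theta[ec(T)]$ becomes a sequent of one of the forms $\bot \Rightarrow A$, $A \circ \bot \Rightarrow \bot$, $\bot \circ A \Rightarrow \bot$, $A \Rightarrow \top$, $A \circ \top \Rightarrow \top$, $\top \circ A \Rightarrow \top$, each of which is immediately derivable in $\mathsf{BDFNL^*}$ from the axioms $(\bot)$ or $(\top)$. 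Splicing these short derivations in place of the corresponding leaves produces a $c(T)$-deduction of $\Gamma \Rightarrow A$ from $\Phi$ in $\mathsf{BDFNL^*}$.

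I expect the only delicate point to be the well-definedness of the inverse substitution used in the backward direction: one must ensure that $p_\bot$ and $p_\top$ never appear except as the images of $\bot$ and $\top$, so that no two distinct $\mathsf{BDFNL^*}$-formulae are conflated when we substitute back. This is secured by the freshness of $p_\bot, p_\top$ together with the bijection $(c(T))^\S = c(ec(T))$; granting that, both directions reduce to a routine rule-by-rule translation.
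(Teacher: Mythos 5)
Your proposal is correct and follows essentially the same route as the paper, which justifies this lemma in one sentence (all rules of $\mathsf{BDFNL^*}$ are rules of $\mathsf{DFNL^*}$, plus Lemma \ref{theorem:misbottop} for the $(\bot)$ and $(\top)$ axioms); your forward induction is exactly that sketch made explicit, and your backward direction by the fresh-letter back-substitution $p_\bot\mapsto\bot$, $p_\top\mapsto\top$ mirrors the back-substitution argument the paper already used for the converse of Lemma \ref{Lemma:embbdfnl}. Your reading of the right-hand assumption set as $\Phi^\S$ (i.e., $\Phi$ with $\bot,\top$ renamed) is the intended one, since otherwise $\Phi$ would not even consist of $\mathsf{DFNL^*}$-sequents.
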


Now we conclude with the following theorem

\begin{thm}\label{theorem:embdfnl}
$\Phi \vdash_{\mathsf{BDFNL^*}} \Gamma \Rightarrow A$ iff $\Phi\cup\Theta[ec(T)] \vdash_{\mathsf{DFNL^*}} \Gamma^\S \Rightarrow A^\S$.
\end{thm}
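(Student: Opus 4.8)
The plan is to derive the theorem by combining the two lemmas that immediately precede it, exactly as Theorem~\ref{theorem:embbdfnl} was obtained from its subformula-property lemma together with its bounded embedding lemma. Fix $T$ to be the set of all subformulae occurring in $\Gamma\Imp A$ or in $\Phi$, together with $\top$ and $\bot$, so that $ec(T)$, $\Theta[ec(T)]$ and the map $(.)^\S$ are determined as above. I would use the $\mathsf{BDFNL^*}$ subformula-property lemma stated just before the theorem to move between an unrestricted deduction and a $c(T)$-bounded one, and the preceding equivalence in its bounded form, namely $\Phi\vdash_{\mathsf{BDFNL^*}}\Gamma\Rightarrow_{c(T)}A$ iff $\Phi\cup\Theta[ec(T)]\vdash_{\mathsf{DFNL^*}}\Gamma^\S\Rightarrow_{c(ec(T))}A^\S$, to cross between the two calculi.

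For the forward direction, suppose $\Phi\vdash_{\mathsf{BDFNL^*}}\Gamma\Rightarrow A$. The subformula-property lemma yields $\Phi\vdash_{\mathsf{BDFNL^*}}\Gamma\Rightarrow_{c(T)}A$, and the left-to-right part of the bounded equivalence then gives $\Phi\cup\Theta[ec(T)]\vdash_{\mathsf{DFNL^*}}\Gamma^\S\Rightarrow_{c(ec(T))}A^\S$. Forgetting the restriction to $c(ec(T))$-sequents, a bounded deduction is in particular a deduction, so $\Phi\cup\Theta[ec(T)]\vdash_{\mathsf{DFNL^*}}\Gamma^\S\Rightarrow A^\S$, as required.

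The backward direction is where the work lies, and it is the step I expect to be the main obstacle. Here one starts from an \emph{unrestricted} deduction $\Phi\cup\Theta[ec(T)]\vdash_{\mathsf{DFNL^*}}\Gamma^\S\Rightarrow A^\S$ and must feed it into the right-to-left part of the bounded equivalence, which demands a $c(ec(T))$-bounded deduction; bridging this gap is the crux. I see two routes. The first mirrors the forward argument: invoke the subformula property for $\mathsf{DFNL^*}$ with assumptions (the analogue, from the same source, of the $\mathsf{BDFNL^*}$ lemma used above) to bound the deduction, and then check that the bounding formula set lies inside $c(ec(T))$; this holds because $ec(\cdot)$ commutes with the taking of subformulae, $ec(T)$ already contains $p_\bot$ and $p_\top$, and every sequent of $\Theta[ec(T)]$ is built only from $ec(T)$-formulae and $p_\bot,p_\top$. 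The second route, which I find cleaner and which is in any case the source of the backward half of the bounded lemma, is to rewrite the given $\mathsf{DFNL^*}$-deduction directly into a $\mathsf{BDFNL^*}$-deduction via the substitution $p_\bot\mapsto\bot$, $p_\top\mapsto\top$: under it $\Gamma^\S\Rightarrow A^\S$ becomes $\Gamma\Rightarrow A$, every $\mathsf{DFNL^*}$ rule instance becomes the same $\mathsf{BDFNL^*}$ rule instance, the assumptions in $\Phi$ are untouched, and each assumption of $\Theta[ec(T)]$ turns into one of $\bot\Rightarrow A$, $A\circ\bot\Rightarrow\bot$, $\bot\circ A\Rightarrow\bot$, each an instance of the axiom $(\bot)$, or $A\Rightarrow\top$, $A\circ\top\Rightarrow\top$, $\top\circ A\Rightarrow\top$, each an instance of the axiom $(\top)$, all of which are therefore already derivable in $\mathsf{BDFNL^*}$. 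Either route delivers $\Phi\vdash_{\mathsf{BDFNL^*}}\Gamma\Rightarrow A$, completing the equivalence.
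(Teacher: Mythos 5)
Your proposal is correct, and its skeleton is exactly the paper's: the paper offers no separate argument for this theorem at all, deriving it immediately from the two preceding lemmas (the $\mathsf{BDFNL^*}$ subformula-property lemma from \cite{bus4} and the bounded equivalence $\Phi \vdash_{\mathsf{BDFNL^*}} \Gamma \Rightarrow_{c(T)} A$ iff $\Phi\cup\Theta[ec(T)] \vdash_{\mathsf{DFNL^*}} \Gamma^\S \Rightarrow_{c(ec(T))} A^\S$), just as Theorem~\ref{theorem:embbdfnl} was obtained earlier. Your forward direction is verbatim that argument. Where you add value is the backward direction: you correctly observe that the literal combination of the two lemmas is gapped there, since the right-to-left half of the bounded lemma demands a $c(ec(T))$-bounded $\mathsf{DFNL^*}$ deduction while the theorem hands you an unrestricted one, and the paper nowhere states a subformula property for $\mathsf{DFNL^*}$ with assumptions. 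Both of your repairs are sound, and the second (substituting $p_\bot\mapsto\bot$, $p_\top\mapsto\top$ throughout the deduction, turning every sequent of $\Theta[ec(T)]$ into an instance of the axioms $(\bot)$ or $(\top)$ and leaving $\Phi$ fixed because $p_\bot,p_\top$ are fresh) is precisely the argument that underlies the backward half of the paper's own bounded lemma, so it introduces nothing foreign; it also has the merit of bypassing the bounded lemma entirely for that direction. In short: same decomposition as the paper, with an explicit patch for a step the paper silently glosses over.
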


Obviously both the construction of the set $\Phi\cup\Theta[ec(T)]$ and the reduction are in polynomial time. Then by Theorem \ref{theorem:embdfnl} and \ref{theorem:comBFNL}, one gets the following theorem.

\begin{thm}
The decision problem of $\mathsf{DFNL}^*(\mathrm{\Phi})$ is PSPACE-hard.
\end{thm}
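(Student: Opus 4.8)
The plan is to obtain PSPACE-hardness of $\mathsf{DFNL}^*(\mathrm{\Phi})$ not by constructing a fresh reduction, but by \emph{composing} the two faithful embeddings already developed in this section. Each of those embeddings has been proved correct as a biconditional on derivability (Theorems~\ref{theorem:embbdfnl} and~\ref{theorem:embdfnl}), so their composition is automatically a correct many--one reduction; what is left is to check that it can be computed in polynomial time and then to invoke the known PSPACE-hardness of the source problem, the decision problem for $\mathsf{BFNL}^*$ (Theorem~\ref{theorem:comBFNL}).

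Concretely, given an arbitrary $\mc{L}_{\mathsf{BFNL^*}}$-sequent $\Gamma\Rightarrow A$, let $T$ be its subformula closure containing $\top,\bot$. By Theorem~\ref{theorem:embbdfnl}, $\vdash_{\mathsf{BFNL^*}}\Gamma\Rightarrow A$ iff $\Psi[exn(T)]\vdash_{\mathsf{BDFNL^*}}\Gamma^\ddagger\Rightarrow A^\ddagger$; this is the first link, reducing $\mathsf{BFNL}^*$ to $\mathsf{BDFNL}^*(\mathrm{\Phi})$ (and it is exactly what already gave PSPACE-hardness of the latter). Next I would feed the resulting $\mathsf{BDFNL}^*$-instance $(\Psi[exn(T)],\,\Gamma^\ddagger\Rightarrow A^\ddagger)$ into Theorem~\ref{theorem:embdfnl} with $\Phi:=\Psi[exn(T)]$: applying $(.)^\S$ and adjoining $\Theta[ec(T')]$ for the relevant subformula-closed set $T'$ yields that this instance is derivable in $\mathsf{BDFNL}^*$ iff its translation is derivable in $\mathsf{DFNL}^*$. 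Chaining the two biconditionals gives $\vdash_{\mathsf{BFNL^*}}\Gamma\Rightarrow A$ iff the final $\mathsf{DFNL}^*(\mathrm{\Phi})$-instance is derivable, which is the desired reduction.

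The only real work, and the step I expect to be the main obstacle, is the polynomiality check; I anticipate it to be routine bookkeeping rather than a genuine difficulty. I would verify three points. First, neither embedding blows up formula size: $(.)^\S$ is size-preserving up to a constant, and although $(.)^\ddagger$ calls $(.)^\sim$, which pushes negations inward by De Morgan, each application of $(.)^\sim$ changes size only by a constant factor and bottoms out at the fresh letters $p_B$, so even iterated negations $\neg\neg\cdots$ cause merely linear growth. Second, the assumption sets are small: $\Psi[exn(T)]$ and $\Theta[ec(T')]$ each consist of $O(|T|)$ sequents of bounded size, and $|T|$ is linear in the input. Third, forming the subformula closures and the auxiliary propositional letters is itself polynomial-time.

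Putting these together, the composite map is a polynomial-time reduction from the PSPACE-hard decision problem of Theorem~\ref{theorem:comBFNL} to $\mathsf{DFNL}^*(\mathrm{\Phi})$, so the latter is PSPACE-hard. The one place where a careless argument could go wrong is the size analysis of $(.)^\ddagger$ under nested negations; once that is seen to remain linear, the theorem follows immediately from the correctness biconditionals already established.
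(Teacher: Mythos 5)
Your proposal is correct and takes essentially the same route as the paper: the paper also obtains this theorem by chaining the two embedding results (Theorems~\ref{theorem:embbdfnl} and~\ref{theorem:embdfnl}) with the PSPACE-hardness of $\mathsf{BFNL^*}$ (Theorem~\ref{theorem:comBFNL}), remarking only that the construction of $\Phi\cup\Theta[ec(T)]$ and the translation are polynomial-time. Your explicit size bookkeeping for $(.)^\ddagger$ and $(.)^\sim$ is just a fleshed-out version of what the paper treats as obvious.
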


\section{Some Variants of $\mathsf{BFNL^*}$}
Let us apply the methods in section one to some variants of $\mathsf{BFNL^*}$. The first example is  $\mathsf{BFNL_e^*}$, i.e. $\mathsf{BFNL^*}$ with the following exchange rule:
\begin{displaymath}
(\cdot E)\quad \frac{\Gamma[\Delta_1\circ \Delta_2]\Rightarrow A}{\Gamma[\Delta_2\circ \Delta_1]\Rightarrow A}
\end{displaymath}
In $\mathsf{BFNL_e^*}$, $A\backslash B \Leftrightarrow A/B$ holds and hence we consider only one residual usually denoted $A\rightarrow B$. All results from section 1 can be proved for $\mathsf{BFNL_e^*}$.  The embedding function $\dag$ and the proofs of Lemma \ref{lemma:KBFNL} remains the same. However the construction for the ternary relation model ($\mf{J}^\mf{M}$) for $\mathsf{BFNL_e^*}$ requires some modifications in order to satisfy that $\mf{J}^\mf{M}\models A\cdot B$ iff $\mf{J}^\mf{M} \models B\cdot A$.

Let $\mf{M}=(W, R, V)$ be a Kripke model for $\mathsf{K}$. Define an $\mathsf{BFNL_e^*}$-model $\mathfrak{J^M}=(W', R', V')$ from $\mf{M}$ as follows:
\begin{itemize}
\item[(1)] $W'=\{u_1,u_2| u\in W\}$
\item[(2)] $R'=\{\langle v_1,u_1,u_2\rangle, \langle v_1,u_2, u_1\rangle, \langle v_2, u_1, u_2\rangle, \langle v_2,u_2,u_1\rangle\mid vRu\}$
\item[(3)] $V'(p) = \{u_1,u_2\mid u\in V(p)\}$ for $p\in\mathsf{Prop}$; and $V'(m) = W'$.
\end{itemize}

Lemma \ref{lemma:truth} remains ture. In order to get an analogous theorem of Theorem \ref{thm:comBFNL^*}, we need the following two lemmas.

\begin{lem}\label{ARBAtruth}
For any $\mathcal{L}_\mathsf{BFNL_e^*}$-formula $A$ and $u_1,u_2\in W'$, $\mf{J}^\mf{M},u_1\models A$ iff $\mf{J}^\mf{M},u_2\models A$.
\end{lem}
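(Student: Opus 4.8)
The plan is to prove the stronger, more uniform statement that the copy-swapping map is a model automorphism, and then read off the lemma at a single point. Concretely, I would define the involution $\sigma:W'\imp W'$ by $\sigma(u_1)=u_2$ and $\sigma(u_2)=u_1$, so that $\sigma^2=\mathrm{id}$ and $\sigma(u_1)=u_2$ for every $u\in W$. The lemma is then exactly the instance at $x=u_1$ of the claim that $\mf{J}^\mf{M},x\models A$ iff $\mf{J}^\mf{M},\sigma(x)\models A$ for all $x\in W'$ and all $A$, which I would establish by induction on the complexity of $A$.

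First I would check that $\sigma$ is an automorphism of the frame, i.e. $R'(x,y,z)$ iff $R'(\sigma(x),\sigma(y),\sigma(z))$. This is a one-line inspection of the four defining patterns of $R'$: every triple in $R'$ has the shape (a copy of $a$, one copy of $b$, the other copy of $b$) with $aRb$, and applying $\sigma$ componentwise again produces a triple of exactly this shape for the same pair $aRb$, hence again a member of $R'$. I would likewise verify valuation invariance, $x\in V'(p)$ iff $\sigma(x)\in V'(p)$ for every propositional letter $p$ (and trivially for $m$, since $V'(m)=W'$), which holds because $V'(p)$ contains $u_1$ precisely when it contains $u_2$. With these two facts in hand, the induction is routine: the atomic cases use valuation invariance, and the cases $\bot,\top,\wedge,\vee,\neg$ follow immediately from the induction hypothesis since these connectives are evaluated locally. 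The only cases that touch the relational structure are the multiplicative ones. For $A\cdot B$, given witnesses $y,z$ with $R'(x,y,z)$, $\mf{J}^\mf{M},y\models A$, $\mf{J}^\mf{M},z\models B$, the pair $\sigma(y),\sigma(z)$ witnesses $\sigma(x)\models A\cdot B$ by frame invariance together with the induction hypothesis on $A$ and $B$, and the converse is symmetric using $\sigma^2=\mathrm{id}$. For the residuals $A\backslash B$ and $A/B$, which are universally quantified, I would take an arbitrary triple at $\sigma(x)$, push it through $\sigma$ to obtain a corresponding triple at $x$ (again using $\sigma^2=\mathrm{id}$), apply the hypothesis that $x$ satisfies the residual, and transport the resulting implication back with the induction hypothesis on the two immediate subformulae. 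Setting $x=u_1$ then yields the lemma.

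I expect the only genuine work to be the bookkeeping in the residual cases, where one must keep straight which state occupies which of the three arguments of $R'$ and use that $\sigma$ is a bijection, so that quantifying over triples at $x$ corresponds bijectively to quantifying over triples at $\sigma(x)$; the frame-invariance and valuation-invariance checks themselves are immediate. An equivalent but more pedestrian route, which avoids naming $\sigma$, is to argue the multiplicative cases by directly computing, for each of $\cdot,\backslash,/$, the set of triples of $R'$ having $u_1$ respectively $u_2$ in the relevant argument slot and exhibiting the evident bijection between these two sets (swapping the copy index in the premise position), then invoking the induction hypothesis to match the truth values; this carries out the same induction with $\sigma$ left implicit.
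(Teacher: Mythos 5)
Your proof is correct, and note first that the ``stronger, more uniform statement'' you announce is in fact the lemma itself: quantifying over all $u\in W$ and both copies is exactly the assertion that truth is invariant under your swap map $\sigma$. Where you genuinely differ from the paper is in how the symmetry of the construction is packaged. You isolate one global fact --- $\sigma$ is a frame automorphism of $\mf{J}^\mf{M}$ that preserves $V'$ --- and then run the standard model-theoretic induction showing that truth is preserved by automorphisms, so every connective is handled by the same mechanism: push witnesses (or arbitrary triples) through $\sigma$, use $\sigma^{2}=\mathrm{id}$, and invoke the inductive hypothesis on the immediate subformulae. The paper instead performs the induction directly on the tuples of $R'$, exploiting slot-wise rather than componentwise symmetries: for $A\cdot B$ it uses that $R'(v_1,y,z)$ holds iff $R'(v_2,y,z)$ holds with the witnesses $y,z$ left \emph{fixed} --- so that case needs no inductive hypothesis at all --- and for $A\rightarrow B$ it uses that any triple of $R'$ whose last slot is $u_2$ must have $u_1$ in the middle slot, together with the inductive hypothesis applied only to $A$. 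Your version buys uniformity and robustness: it treats $\backslash$ and $/$ by one symmetric argument, and it would survive any modification of $R'$ that keeps $\sigma$ an automorphism; it is also cleaner than the paper's residual case, whose write-up is rather compressed. The paper's version buys economy, since the product case is immediate and each case displays exactly which structural feature of $R'$ it consumes. Both arguments are sound and establish the lemma.
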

\begin{proof}
We proceed by induction on the complexity of $A$. The cases of atomic formulae, $A\wedge B$ and $A\rightarrow B$ are easy. We show only the cases of $A\cdot B$ and $A\rightarrow B$. Assume
$\mf{J}^\mf{M}, v_1\models A\cdot B$. By construction, there exist $u_1,u_2\in W'$ such that $R'(v_1, u_1,u_2)$ and $\mf{J}^\mf{M},u_1\models A$ and $\mf{J}^\mf{M},u_2\models B$. By the construction, $R'(v_2, u_1,u_2)$. Consequently, $\mf{J}^\mf{M}, v_2\models A\cdot B$. The other direction is shown similarly.
Assume $\mf{J}^\mf{M}, u_1\models A\rightarrow B$. By construction, for any $v_i \in W'$, one obtains $R'(v_i,u_2,u_1)$ and $\mf{J}^\mf{M},u_2\models A$ and $\mf{J}^\mf{M},v_i\models B$. Suppose $v_i= v_1$ without loss of generality. By inductive hypothesis, $\mf{J}^\mf{M},u_1\models A$. Since by construction $R'(v_1,u_1,u_2)$, one gets $\mf{J}^\mf{M},u_2\models A\rightarrow B$. The other direction is shown similarly.\qed
\end{proof}

\begin{lem}\label{lemma:ex}
$\mf{J}^\mf{M}\models A\cdot B\Leftrightarrow B\cdot A$.
\end{lem}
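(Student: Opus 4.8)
The plan is to observe that the ternary relation $R'$ of $\mf{J}^\mf{M}$ is built so as to be symmetric in its second and third coordinates, and then to read off commutativity of $\cdot$ directly from the satisfaction clause. First I would note that, inspecting the definition of $R'$, every triple it contains occurs together with the triple obtained by transposing its last two entries: the generators $\langle v_1,u_1,u_2\rangle$ and $\langle v_1,u_2,u_1\rangle$ are both present, as are $\langle v_2,u_1,u_2\rangle$ and $\langle v_2,u_2,u_1\rangle$, for every $v\mathrel{R}u$. Hence $R'(x,y,z)$ implies $R'(x,z,y)$ for all $x,y,z\in W'$; this is exactly the frame condition corresponding to commutativity of $\cdot$.

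With this symmetry in hand the lemma is almost immediate. Fix any state $x\in W'$ and suppose $\mf{J}^\mf{M},x\models A\cdot B$. By the clause for $\cdot$ there are $y,z\in W'$ with $R'(x,y,z)$, $\mf{J}^\mf{M},y\models A$ and $\mf{J}^\mf{M},z\models B$. By the symmetry just noted $R'(x,z,y)$ holds, so taking $z$ as the witness for $B$ and $y$ as the witness for $A$ yields $\mf{J}^\mf{M},x\models B\cdot A$. The converse inclusion is proved by the same transposition. Since $x$ was arbitrary, the biconditional holds at every state, i.e. $\mf{J}^\mf{M}\models A\cdot B\Leftrightarrow B\cdot A$.

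There is essentially no hard step: the entire content is the combinatorial fact that the four generators close $R'$ under swapping the second and third arguments. The only point needing care is to confirm that no further triples are forced into $R'$ that might break this symmetry, but since $R'$ is given by explicit enumeration rather than by a closure condition, this is clear by inspection. Note that Lemma \ref{ARBAtruth} is not needed here; it records the independent fact that the two copies $u_1,u_2$ of a world agree on every formula, which will instead be used when transferring the truth lemma and soundness to the exchange setting.
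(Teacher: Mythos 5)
Your proof is correct, but it takes a genuinely different route from the paper's. You exploit the fact that $R'$ is closed under transposing its second and third coordinates (each generator $\langle v_i,u_1,u_2\rangle$ comes paired with $\langle v_i,u_2,u_1\rangle$), which is precisely the frame condition corresponding to commutativity of $\cdot$; you then keep the same witnesses and swap the triple. The paper instead keeps the triple $R'(v_1,u_1,u_2)$ fixed and swaps the roles of the witnesses: it invokes Lemma~\ref{ARBAtruth} to conclude that since $u_1\models A$ and $u_2\models B$, also $u_2\models A$ and $u_1\models B$, so the same triple witnesses $B\cdot A$ (the paper's final line reads ``$\mf{J}^\mf{M},v_1\models A\cdot B$'', an evident typo for $B\cdot A$). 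Your observation that Lemma~\ref{ARBAtruth} is not needed is accurate for your argument. What each approach buys: yours isolates the general soundness fact --- any ternary model whose relation is symmetric in the last two arguments validates exchange --- so it would survive changes to the construction of $\mf{J}^\mf{M}$ as long as that symmetry is preserved; the paper's proof reuses machinery it has already established and works even in a hypothetical variant where the relation lacked the transposition symmetry but the two copies of each world still agreed on all formulae. One small point in your favour: you argue at an arbitrary state $x\in W'$, whereas the paper only writes out the case of states of the form $v_1$, leaving $v_2$ implicit.
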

\begin{proof}
We prove the left to right direction. The other direction can be shown similarly.
Assume that $\mf{J}^\mf{M}, v_1\models A\cdot B$. Then there exist $u_1,u_2\in W'$ such that $R'(v_1,u_1,u_2)$, $\mf{J}^\mf{M}, u_1\models A$ and $\mf{J}^\mf{M}, u_2\models B$. By Lemma \ref{ARBAtruth}, one obtains $\mf{J}^\mf{M}, u_1\models B$ and $\mf{J}^\mf{M}, u_2\models A$. Hence $\mf{J}^\mf{M}, v_1\models A\cdot B$. \qed

\end{proof}

All results of section 1 can also easily be adapted for the modal extensions of $\mathsf{BFNL^*_i}$ ($i\in\{\mathsf{K,T,K4,S4,S5}\}$).  Now formula trees that occur in the antecedents of sequents are composed from formulae by two structure operations, a binary one $\circ$ and a unary one $\langle-\rangle$, corresponding to the two products $\cdot$ and $\Diamond$, respectively. Caution the language of modal formulae contains $\Diamond A$, $\Box^\downarrow A$ and formula trees contains $\langle \Gamma \rangle$. $\mathsf{BFNL^*_i}$ is obtained from $\mathsf{BFNL^*}$ by adding the following modal rules and $i$ modal logic axioms, respectively.
\begin{displaymath}
(\mathrm{\Diamond L}) \quad\frac{\Gamma[\langle A \rangle] \Rightarrow B}{\Gamma[\Diamond A] \Rightarrow B} \quad\quad (\mathrm{\Diamond R})\quad \frac{\Gamma \Rightarrow A}{\langle \Gamma \rangle \Rightarrow \Diamond A}
\end{displaymath}
\begin{displaymath}
(\mathrm{\Box^\downarrow L}) \quad\frac{\Gamma[A] \Rightarrow B}{\Gamma[\langle \Box^\downarrow A \rangle] \Rightarrow B} \quad(\mathrm{\Box^\downarrow R}) \quad\frac{\langle \Gamma \rangle \Rightarrow A}{\Gamma \Rightarrow \Box^\downarrow A}
\end{displaymath}
\begin{displaymath}
\mathrm{(T)} \quad A\Rightarrow \Diamond A\quad \mathrm{(4)}\quad \Diamond \Diamond A \Rightarrow \Diamond A \quad \mathrm{(5)}\quad \Diamond A \Rightarrow \Box \Diamond A
   \end{displaymath}
By the results in \cite{Me22}, we know that all $\mathsf{BFNL^*_i}$ admit subformula property. Noticed that axiom (K) $\Box (A \supset B)\Rightarrow \Box A \supset \Box B$, where $\Box=\neg \Diamond \neg$, is admissible in $\mathsf{BFNL^*}$ enriched with the above $\Diamond$ and $\Box^\downarrow$ rules. It is sufficed to show these modal extensions of BFNL$^*$ are conservative extensions of $\mathsf{BFNL^*}$. Then the proofs of PSPACE-hardness of the the decision problems in these systems follow from Theorem \ref{thm:comBFNL^*}.

\begin{lem}
For any $\mathcal{L}_{\mathsf{BFNL^*}}$ sequent $\Gamma \Rightarrow A$, $\vdash_{\mathsf{BFNL^*}}\Gamma \Rightarrow A$ iff $\vdash_{\mathsf{BFNL^*_i}}\Gamma \Rightarrow A$.
\end{lem}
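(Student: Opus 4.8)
The plan is to prove the two directions separately, the left-to-right inclusion being trivial and the right-to-left inclusion being the genuine conservativity claim. For the forward direction, I note that $\mathsf{BFNL^*_i}$ is obtained from $\mathsf{BFNL^*}$ merely by adjoining the rules $(\mathrm{\Diamond L}), (\mathrm{\Diamond R}), (\mathrm{\Box^\downarrow L}), (\mathrm{\Box^\downarrow R})$ and the axioms $(\mathrm{T}), (\mathrm{4}), (\mathrm{5})$; hence any derivation of $\Gamma \Rightarrow A$ in $\mathsf{BFNL^*}$ is verbatim a derivation in $\mathsf{BFNL^*_i}$, giving $\vdash_{\mathsf{BFNL^*_i}} \Gamma \Rightarrow A$.

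For the converse I would rely on the subformula property of $\mathsf{BFNL^*_i}$ established in \cite{Me22}. Assuming $\vdash_{\mathsf{BFNL^*_i}} \Gamma \Rightarrow A$ with $\Gamma \Rightarrow A$ a $\mathcal{L}_{\mathsf{BFNL^*}}$-sequent, that property furnishes a derivation $\mathcal{D}$ in which every formula occurring is a subformula of a formula of $\Gamma \Rightarrow A$, with the subformula set closed only under the boolean connectives $\wedge, \vee$ and $\neg$. The decisive point is that this boolean closure introduces no modal connective: since $\Gamma \Rightarrow A$ contains neither $\Diamond$ nor $\Box^\downarrow$, and closing under $\wedge, \vee, \neg$ preserves modality-freeness, no formula anywhere in $\mathcal{D}$ contains $\Diamond$, $\Box^\downarrow$ or $\Box$.

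I would then read off conservativity by inspecting the extra inference figures. Each of $(\mathrm{\Diamond L}), (\mathrm{\Diamond R}), (\mathrm{\Box^\downarrow L}), (\mathrm{\Box^\downarrow R})$ exhibits a principal modal formula ($\Diamond A$ or $\Box^\downarrow A$) in its conclusion, and each of $(\mathrm{T}), (\mathrm{4}), (\mathrm{5})$ is itself a modal sequent; by the modality-freeness just established, none of them can occur in $\mathcal{D}$. Since the unary structural operation $\langle - \rangle$ is introduced only through these four modal rules, it too is absent from $\mathcal{D}$, so every node of $\mathcal{D}$ is a legitimate $\mathsf{BFNL^*}$-sequent built solely by $\mathsf{BFNL^*}$ axioms and rules. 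Hence $\mathcal{D}$ witnesses $\vdash_{\mathsf{BFNL^*}} \Gamma \Rightarrow A$.

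The hard part will be pinning down the exact form of the subformula property imported from \cite{Me22}: one must ensure that the analytic presentation of the modalities, together with the structural axioms $(\mathrm{T}), (\mathrm{4}), (\mathrm{5})$, yields a subformula property whose closure is purely boolean rather than also closed under $\Diamond$ or $\Box$. Were the closure to include modal connectives, a modality-free endsequent could in principle spawn modal formulae in its derivation and the argument would collapse; so the crux is verifying that the modal axioms are accommodated (for instance after cut elimination) without enlarging the subformula set by modalities. Granting that, the remainder of the proof is the routine rule-by-rule inspection sketched above.
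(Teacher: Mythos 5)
Your proposal is correct and takes essentially the same route as the paper's own proof: the trivial inclusion in one direction, and for conservativity the subformula property of $\mathsf{BFNL^*_i}$ from \cite{Me22}, which yields a modality-free derivation in which no $\Diamond$-rules, $\Box^\downarrow$-rules, or modal axioms can occur, so the derivation is already a $\mathsf{BFNL^*}$-derivation. Your closing caveat---that the subformula closure must be boolean only and must not be enlarged by $\Diamond$ or $\Box$---is precisely the point the paper relies on implicitly when it asserts the derivation "contains no modal formulae."
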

\begin{proof}
The 'if' part is easy. We show the 'only if' part. Assume that $\vdash_{\mathsf{BFNL^*_i}}\Gamma \Rightarrow A$. By subformula property, there exists a derivation containing no modal formulae, which yields that no $\Diamond$-rules and $\Box^\downarrow$-rules are applied in this derivation. It also follows that no modal axioms appear in this derivation. Hence this derivation can be treated as a derivation in $\mathsf{BFNL^*}$. Hence $\vdash_{\mathsf{BFNL^*}}\Gamma \Rightarrow A$. \qed
\end{proof}

Since the reduction is trivial, one gets the following theorem.
\begin{thm}
The decision problems in $\mathsf{BFNL^*_i}$ for $i\in \{\mathsf{K, T, K4, S4,S5}\}$ are PSPACE-hard.
\end{thm}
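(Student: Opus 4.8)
The plan is to derive PSPACE-hardness of each $\mathsf{BFNL^*_i}$ by a trivial identity reduction from the decision problem of $\mathsf{BFNL^*}$, which is already PSPACE-hard by Theorem \ref{theorem:comBFNL}. All the substantive work has been isolated in the conservativity lemma just proved, asserting that for every $\mathcal{L}_{\mathsf{BFNL^*}}$ sequent $\Gamma \Rightarrow A$ one has $\vdash_{\mathsf{BFNL^*}}\Gamma \Rightarrow A$ iff $\vdash_{\mathsf{BFNL^*_i}}\Gamma \Rightarrow A$; the remainder is a routine transfer of hardness along a reduction.

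First I would fix $i \in \{\mathsf{K,T,K4,S4,S5}\}$ and take the map sending an $\mathcal{L}_{\mathsf{BFNL^*}}$ sequent $\Gamma \Rightarrow A$ to itself, now read as an $\mathcal{L}_{\mathsf{BFNL^*_i}}$ sequent. This is legitimate because $\mathcal{L}_{\mathsf{BFNL^*}} \subseteq \mathcal{L}_{\mathsf{BFNL^*_i}}$, the latter language merely adjoining the modal connectives $\Diamond$ and $\Box^\downarrow$; in particular the map is the identity on syntax and hence computable in linear time. Next I would check that it is a many-one reduction: by the conservativity lemma, $\Gamma \Rightarrow A$ is a positive instance for $\mathsf{BFNL^*}$ precisely when its image is a positive instance for $\mathsf{BFNL^*_i}$, the biconditional ensuring that ``yes'' instances and ``no'' instances are preserved in both directions.

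Since PSPACE-hardness is closed under polynomial-time many-one reductions and the identity map is trivially polynomial, the hardness of $\mathsf{BFNL^*}$ from Theorem \ref{theorem:comBFNL} transfers at once to $\mathsf{BFNL^*_i}$; as $i$ was arbitrary, the theorem follows for all five logics. The one genuine difficulty lies upstream in the conservativity lemma and has already been dispatched there: it relies on the subformula property for $\mathsf{BFNL^*_i}$ established in \cite{Me22}, which forces every derivation of a modality-free sequent to use neither the $\Diamond$- and $\Box^\downarrow$-rules nor any modal axiom, so that such a derivation is verbatim a $\mathsf{BFNL^*}$-derivation. Given that lemma, nothing beyond observing the triviality of the reduction is needed here.
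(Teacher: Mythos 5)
Your proposal is correct and follows exactly the paper's route: the paper likewise invokes the conservativity lemma (provability of a modality-free sequent in $\mathsf{BFNL^*_i}$ coincides with provability in $\mathsf{BFNL^*}$, via the subformula property from \cite{Me22}) and then transfers PSPACE-hardness from Theorem \ref{theorem:comBFNL} along the trivial identity reduction. Your write-up merely makes explicit the routine reduction bookkeeping that the paper compresses into the remark ``since the reduction is trivial.''
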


This result can also be proved for $\mathsf{BFNL^*_{ei}}$, and proofs are similar as above. One can add the multiplicative constant 1. We consider the axiom $(1\mathrm{R})$ $\Rightarrow 1$, and  the rules:
\begin{displaymath}
\mathrm{(1L}_l)\quad \frac{\Gamma[\Delta]\Rightarrow A}{\Gamma[1\circ \Delta]\Rightarrow A}\quad \mathrm{(1L}_r)\quad  \frac{\Gamma[\Delta]\Rightarrow A}{\Gamma[ \Delta\circ 1]\Rightarrow A}.
\end{displaymath}
There are no problems with adapting our results for $\mathsf{BFNL1_i}$ and $\mathsf{BFNL1_{ei}}$, i.e $\mathsf{BFNL_i^*}$ with 1 and $\mathsf{BFNL_{ei}^*}$ with 1. The only difference is that the construction of $\mf{J^M}$ required additional conditions. One adds a specail element $1$ to $W'$ such that for any $u\in W'$, $R'(u,1,u)$ and $R'(u,u,1)$ hold. Moreover, for any propositional letter $p$, $1\in V'(p)$ iff $V(p)=W$. By induction on the complexity of formulae, one can easily prove that $\mf{J^M}\models A$ iff  $\mf{J^M}, 1\models A$. On the other hand, these new conditions do no effect on the Lemma \ref{lemma:truth} and Lemma \ref{lemma:com}. Hence our proof of PSPACE-hardness remains true, which yields the decision problems for $\mathsf{BFNL1_i}$, $\mathsf{BFNL1_{ei}}$, $\mathsf{BFNL1_i}$ and $\mathsf{BFNL1_{ei}}$ are PSPACE-hard.
\bibliographystyle{plain}
\bibliography{Linzhelacl}

\end{document}